\newtheorem{prop}{Proposition}
\newtheorem{lemma}{Lemma}
\theoremstyle{definition}
\newtheorem{rmk}{Remark}
\newtheorem{ex}{Example}
\theoremstyle{remark}
\newcommand{\formost}{{\textstyle\bigvee \hspace{-3.04mm} | \,\,}}
\newcommand{\law}{\mathscr{L}}
\DeclareMathOperator{\tr}{tr}
\newcommand{\mc}{\mathrm{mc}}
\newcommand{\eq}{\mathrm{eq}}
\newcommand{\Hilbert}{\mathscr{H}}
\newcommand{\E}{\mathcal{E}}
\newcommand{\be}{\begin{equation}}
\newcommand{\ee}{\end{equation}}
\newcommand{\EEE}{\mathbb{E}}
\newcommand{\PPP}{\mathbb{P}}
\newcommand{\RRR}{\mathbb{R}}
\newcommand{\SSS}{\mathbb{S}}
\newcommand{\gc}{\mathrm{gc}}
\newcommand{\gmc}{\mathrm{gmc}}
\newcommand{\gG}{\mathrm{gG}}
\newcommand{\can}{\mathrm{can}}
\newcommand{\SSSH}{\mathbb{S}(\mathscr{H})}
\newcommand{\GAP}{\mathrm{GAP}}
\newcommand{\Fock}{\mathscr{F}}
\newcommand{\Sc}{\overline{S}}
\title{Grand-Canonical Typicality}
\author[1]{Cedric Igelspacher\thanks{ORCID: 0009-0006-0691-7030, E-mail: cedric.igelspacher@uni-tuebingen.de}}
\author[1]{Roderich Tumulka\thanks{ORCID: 0000-0001-5075-9929, E-mail: roderich.tumulka@uni-tuebingen.de}}
\author[2]{Cornelia Vogel\thanks{ORCID: 0000-0002-3905-4730, E-mail: cornelia.vogel@math.lmu.de}}
\affil[1]{Mathematics Institute, Eberhard Karls University T\"ubingen, Auf der Morgenstelle 10, 72076 T\"ubingen, Germany}
\affil[2]{Mathematics Institute, Ludwig Maximilians University,
Theresienstr. 39, 80333 Munich, Germany}
\date{May 21, 2026}
\begin{document}

\maketitle

\begin{abstract}
We study how the grand-canonical density matrix arises in macroscopic quantum systems. ``Canonical typicality'' is the known statement that for a typical wave function $\Psi$ from a micro-canonical energy shell of a quantum system $S$ weakly coupled to a large but finite quantum system $B$, the reduced density matrix $\hat{\rho}^S_\Psi=\tr^B |\Psi\rangle\langle \Psi|$ is approximately equal to the canonical density matrix $\hat{\rho}_\can=Z^{-1}_\can \exp(-\beta \hat{H}^S)$. Here, we discuss the analogous statement and related questions for the \emph{grand-canonical} density matrix $\hat{\rho}_\gc=Z^{-1}_\gc \exp(-\beta(\hat{H}^S-\mu_1 \hat{N}_{1}^S-\ldots-\mu_r\hat{N}_{r}^S))$ with $\hat{N}_{i}^S$ the number operator for molecules of type $i$ in the system $S$. This includes (i) the case of chemical reactions (which requires some novel considerations) and (ii) that of systems $S$ defined by a spatial region which particles may enter or leave. It includes statements about how $\hat{\rho}_\gc$ arises from the density matrix of the appropriate (generalized micro-canonical) Hilbert subspace $\Hilbert_\gmc \subset \Hilbert^S \otimes \Hilbert^B$ (defined by a micro-canonical interval of total energy and suitable particle number sectors) or from typical $\Psi$ in $\Hilbert_\gmc$, as well as statements about the distribution of the (conditional) wave function $\psi^S$ of $S$, which turns out to be a so-called GAP or Scrooge measure. That is, we discuss the foundation and justification of both the density matrix and the distribution of the wave function in the grand-canonical case. To this end (particularly for the chemical reactions), we also need to extend these considerations to the so-called generalized Gibbs ensembles, which apply to systems for which some macroscopic observables are conserved.

\medskip

\noindent {\bf Key words:} thermal density matrix; chemical equilibrium; chemical potential mu; generalized Gibbs ensemble; GAP measure; conditional wave function.
\end{abstract}

\section{Introduction}
\label{sec:1}

We are concerned with quantum analogs of the micro-canonical, the canonical, and the grand-canonical ensemble introduced by Gibbs \cite{Gibbs1902} to classical statistical mechanics. While the micro-canonical and the canonical ones have been discussed extensively in the literature, we cover here the foundations of the grand-canonical one  (as well as the generalized Gibbs ensemble of \eqref{rhogGdef} below). Each of the three ensembles has two kinds of analogs in quantum theory: a density matrix and a probability distribution of wave functions (i.e., over the unit sphere in Hilbert space). The density matrices are well known:
\begin{subequations}
\begin{align}
\hat{\rho}_\mc &= Z_\mc^{-1} \; \mathds{1}_{[E-\Delta E,E]}(\hat{H}) \label{rhomcdef}\\[1mm]
\hat{\rho}_\can &= Z_\can^{-1} \; \exp(-\beta \hat{H}) \label{rhocandef}\\
\hat{\rho}_\gc &= Z_\gc^{-1} \; \exp\Bigl(-\beta(\hat{H}- \mu_1 \hat{N}_{1} - \ldots - \mu_r \hat{N}_{r})\Bigr) \label{rhogcdef}
\end{align}
\end{subequations}
are the micro-canonical, canonical, and grand-canonical density matrices of a quantum system with Hamiltonian $\hat{H}$ and, in the last case, particle number operator $\hat{N}_{i}$ for particle type $i=1,\ldots, r$; $Z$ is always the normalizing constant, $\beta$ the inverse temperature, $\mu_i$ the parameter called the chemical potential, and $\mathds{1}_{[E-\Delta E,E]}(\hat{H})$ is the projection to the energy shell or micro-canonical subspace $\Hilbert_{[E-\Delta E,E]}\subset \Hilbert$ (i.e., the spectral subspace of $\hat{H}$ for this interval, or the span of all eigenvectors with eigenvalue in this interval) in the Hilbert space $\Hilbert$. It is assumed in \eqref{rhogcdef} that
\be
\Bigl[\hat{H},\sum_i \mu_i \hat{N}_{i} \Bigr]=0,
\ee
or else $\hat{\rho}_{\gc}$ would not be invariant under the unitary time evolution $\exp(-i\hat{H}t/\hbar)$ and thus could not  represent thermal equilibrium (which would at least have to be stationary). 

The relevant distribution of the wave function in the unit sphere
\be
\SSS(\Hilbert) = \{\psi \in\Hilbert: \|\psi\|=1\}
\ee
of $\Hilbert$ is \cite{Bloch,GLTZ_06}
\begin{subequations}
\begin{align}\label{GAPmccan}
\PPP_\mc(d\psi) &= u_{\SSS(\Hilbert_{[E-\Delta E,E]})}(d\psi)\\
\PPP_\can(d\psi) &= \GAP_{\hat{\rho}_\can} (d\psi) \label{PcanGAP}
\end{align}
\end{subequations}
in the micro-canonical and canonical case, where $u_{\SSS}$ means the normalized uniform distribution over the sphere $\SSS$ (i.e., the surface area measure such that $u_{\SSS}(\SSS)=1$) and $\GAP_{\hat{\rho}}$ the \textit{Gaussian adjusted projected} (GAP) measure \cite{JRW_94,GLTZ_06,GLMTZ_15,TTV_24,Mark_24,Vog_24,MS_25,Pre26} with density matrix $\hat{\rho}$ (i.e., the most spread-out measure with density matrix $\hat\rho$, also known as the Scrooge measure, see Section~\ref{sec:GAP} for the definition). By \eqref{PcanGAP}, $\GAP_{\hat{\rho}_\can}$ describes the thermal equilibrium distribution of wave functions and can be viewed as a quantum analog of the canonical ensemble from classical statistical mechanics. For example, photons coming from the sun can be taken to have a GAP-distributed wave function. Among other things, a novelty of our paper is that we determine $\PPP_\gc$ (see Statements \hyperref[stat3]{3} and \hyperref[stat4b]{4b} in Section~\ref{sec:claims}). 

Here is a brief overview of the other contributions of this paper. While some of our statements are more or less straightforward extensions of well-known results \cite{PSW_06,GLTZ_06a}, most require novel considerations. In the literature, $\hat\rho_\can$ has been derived (i) from a maximum entropy principle, (ii) as the partial trace of $\hat\rho_\mc$, and (iii) by canonical typicality. As we explain, variant (i) is ultimately not a satisfactory justification. The extension of (ii) and (iii) to $\hat\rho_\gc$ hits obstacles in one of the two main cases in which particle numbers change: chemical equilibrium. It requires a generalization of $\hat\rho_\gc$ that has been called the \emph{generalized Gibbs ensemble} $\hat\rho_\gG$. After clarifying how $\hat\rho_\gG$ applies to chemical equilibrium, and after formulating as a ``general Gibbs principle'' in what sense these density matrices are valid, we formulate and derive general versions of several statements about how, and under which conditions, $\hat\rho_\gG$ and $\hat\rho_\gc$ arise as either the partial trace of a generalized micro-canonical density matrix or the reduced density matrix of a typical $\Psi$ from a generalized micro-canonical subspace. We thereby also obtain a unified treatment of changes in particle number due to either chemical reactions or particles entering and leaving a spatial region. We now explain these issues in the remainder of Section~\ref{sec:1}.

\subsection{How the Ensembles Arise}

The justification of the canonical and grand-canonical density matrices and distributions of wave functions has to do with weakly coupling the quantum system, let us call it $S$, to another quantum system $B$ that is much bigger than $S$ (but finite) and serves as a bath (reservoir). There is then a precise sense of what we mean by ``the'' probability distribution on $\SSS(\Hilbert^S)$ (where we write the system as an upper index): this sense is the typical distribution of the \emph{conditional wave function} \cite{LD90,DGZ_92,GN_99,GLTZ_06,TTV_24} (a concept first introduced in Bohmian mechanics \cite{LD90,DGZ_92} but applicable in any version of quantum mechanics); see Section~\ref{sec:3.2} for the definition and discussion. Put briefly, the conditional wave function of a system $S$ entangled with another system $B$ is the natural way of assigning a wave function $\psi^S$ to $S$ when the wave function $\Psi$ of the combined system is given: take the configuration $Q_B$ of $B$ to be random with the Born distribution and conditionalize on the value of $Q_B$ by setting $\psi^S(q_S) = \Psi(q_S,Q_B)$ times a normalizing factor. (More generally, a conditional wave function can also be defined relative to other bases than the position basis.)\footnote{\label{fn:cond}Conditional wave functions also have a deeper physical relevance as follows: if the time evolution of the full wave function $\Psi$ is taken to be unitary (as it is here), then macroscopic superpositions will arise; such situations are known particularly from quantum measurements, where one would assume that in the real world, a definite outcome exists, and the relevant wave function is the one conditioned on that outcome (say, conditioned on the pointer configuration of the apparatus---the conditional wave function). It has been argued \cite{DGZ_92} that whenever we physicists talk about a wave function that is not the wave function of the universe, we really mean the conditional wave function of some system.}

\bigskip

Coming back to density matrices, there are two well-known ways of justifying $\hat{\rho}_\can$: either as the partial trace over $B$ of the micro-canonical density matrix of $S\cup B$, or as the partial trace over $B$ of $|\Psi\rangle \langle \Psi|$, where the wave function $\Psi$ of $S\cup B$ is a typical unit vector in a micro-canonical subspace of $\Hilbert^S \otimes \Hilbert^B$ (a consideration known as \emph{canonical typicality} \cite{GM_03,PSW_06,GLTZ_06a}; note that the word ``typical'' refers to a probability distribution over $\mathbb{S}(\Hilbert^S\otimes\Hilbert^B)$, here either the uniform distribution $u$ on the unit sphere of a micro-canonical subspace or $\GAP_{\hat{\rho}}$.). Since the former density matrix is the average of the latter, the latter argument is the stronger one: it asserts that for \emph{most} $\Psi$, $S$ will appear to have density matrix $\hat{\rho}_\can$, whereas the former argument yields $\hat{\rho}_\can$ for $S$ only \emph{on average}. These two ways arise as well in the grand-canonical case. However, some considerations specific to this case are required. Specifically, Statements~\hyperref[stat2a]{2a}, \hyperref[stat2b]{2b} and \hyperref[stat3]{3} in Section~\ref{sec:claims} follow from theorems of Popescu et al.~\cite{PSW_06} and Goldstein et al.~\cite{GLMTZ_15}, while for Statements~\hyperref[stat1a]{1a}, \hyperref[stat1b]{1b}, \hyperref[stat4a]{4a}, \hyperref[stat4b]{4b}, and \hyperref[stat5]{5}, new approaches are needed.

In particular, our considerations are in line with the ``individualist'' attitude \cite{Gold_19,GLTZ_20} that a closed system $S\cup B$ in a pure state $\Psi$ can display thermodynamic behavior; this attitude has been widely applied in recent years, in particular in connection with the eigenstate thermalization hypothesis (ETH) (e.g., \cite{Sre94, LPSW, GLMTZ_10, GE_16, Mori_18}). Also the distributions $\PPP_\mc,\PPP_\can,\PPP_\gc$ arise here from an \emph{individual} pure state $\Psi$ (in a way to which the conditional wave function is relevant). 

A standard way of arriving at either $\hat\rho_\can$ or $\hat\rho_\gc$ is to maximize the von Neumann entropy 
\be\label{SvN}
S_\mathrm{vN}=-k_\mathrm{B}\tr(\hat\rho \log \hat\rho)
\ee
(with Boltzmann's constant $k_{\mathrm{B}}$) 
under the constraint that the expected energy is fixed, $\tr(\hat\rho \, \hat{H})=E$ (as well as, in the grand-canonical case, the expected particle numbers, $\tr(\hat\rho\, \hat{N}_i)=n_i$). While this reasoning may suggest that $\hat\rho_\can$ ($\hat\rho_\gc$) represents the knowledge of an observer who knows not more about the state than $E$ (and $n_i$) (e.g., \cite{Jay_57,Wu97}, \cite[Chap.~4]{Bal06}, \cite[Chap.~7]{Bri22})
the reasoning outlined above in terms of $S\cup B$ provides more \cite{PSW_06}: a physical reason why the actual, physical state (of $S$) is given (approximately) by $\hat\rho_\can$ (or $\hat\rho_\gc$); see Section~\ref{sec:comparison} for further discussion.

Also concerning the use of the micro-canonical distribution $u_\mc=u_{\SSS(\Hilbert_\mc)}$ or density matrix $\hat\rho_\mc$, there are different attitudes (e.g., \cite[Sec.~6.8]{Bri22}) in the literature: Sometimes its use is regarded as a ``principle of equal a priori probabilities'' (e.g., \cite{Tol_38,Jan_69,Pen_70}) saying that unless we have further knowledge about $\Psi$ than that it lies in $\Hilbert_\mc$, it is a rational guess to assign each point on $\SSS(\Hilbert_\mc)$ equal (subjective) probability. As we elucidate in Section~\ref{sec:root}, our attitude is different and can be summed up like this: Since it is the nature of thermal equilibrium to behave like most $\Psi$, we find the thermal equilibrium behavior by studying the behavior of most $\Psi$, where ``most'' refers to $u_\mc$ (or, when appropriate, $u_\gmc$ over $\SSS(\Hilbert_\gmc)$ as in \eqref{Hilbertgmcdef}).

\subsection{Chemical Equilibrium and General Gibbs Principle}

Another aspect in the discussion of the grand-canonical ensemble arises from the fact that it refers to a variable particle number and, already classically, there are two distinct situations in which that comes up:  (i)~for chemical reactions such as
\be\label{reaction1}
A + B \leftrightharpoons C + \delta E
\ee
(``chemical equilibrium''), and (ii)~when particles can enter and leave a system (defined by a region in space). To illustrate the latter case, thermal equilibrium in the absence of external fields then leads to equal particle densities in the region and its complement; in order to have a name for it, we will call it ``spatial equilibrium'' in contrast to ``chemical equilibrium.''   
Many textbooks (e.g., \cite{Tol_38,LL_80, Bloch, Schw_06}) focus on spatial equilibrium when discussing the grand-canonical ensemble and omit proper reasoning for chemical equilibrium, or focus on thermodynamic potentials when discussing chemical reactions and omit proper justification from the microscopic physical laws. Here, we provide this connection also for chemical equilibrium, which, to the best of our knowledge, is novel. We suggest that chemical equilibrium is best understood by considering conserved macroscopic observables (such as $\hat{N}_A-\hat{N}_B$ and $\hat{N}_C+\hat{N}_A$ for the chemical reaction mentioned above) and the more abstract perspective of the so-called ``generalized Gibbs ensemble'' \cite[(4.6)]{Bal06}, \cite{RDYO_07, Rigol2, Rigol3, Mori_18}
\be\label{rhogGdef}
\hat{\rho}_\gG = Z^{-1}_\gG \; \exp\left(\sum_{k=1}^K \lambda_k \hat{Q}_k\right) \,,
\ee
which we also discuss as it plays, in fact, a key role. It represents the thermal equilibrium ensemble for which $\hat{Q}_1,\ldots, \hat{Q}_K$ (and only those) are conserved macroscopic observables,\footnote{The expression ``macroscopic observable'' is often intended to include the assumption that the eigenvalues of $\hat{Q}_k$ are separated by the resolution of macroscopic measurements; this is not intended here.} 
assuming they commute with each other. (One of them usually is the Hamiltonian $\hat{H}$. Preliminary considerations in this direction can be found in \cite[Sec.~I.4]{LL_80}.) Note that, as a consequence, there is a $K$-parameter family of thermal equilibrium states with parameters $\lambda_1,\ldots,\lambda_K$ (which includes the canonical case with $K=1$, $\hat{Q}_1=\hat{H}$, and $\lambda_1=-\beta$). The \emph{generalized micro-canonical} subspace $\Hilbert_\gmc$ for this situation is the one where the eigenvalues of each $\hat{Q}_k$ are restricted to an interval $[Q_k-\Delta Q_k, Q_k]$ that is short on the macroscopic scale but still has a high-dimensional spectral subspace $\mathds{1}_{[Q_k-\Delta Q_k,Q_k]}(\hat{Q}_k)\Hilbert$,
\begin{subequations}\label{Hilbertgmcdef}
\begin{align}
\Hilbert_\gmc &:= \mathrm{span}\Bigl\{\phi\in \Hilbert~:~ \forall k: \hat{Q}_k\phi=q_k\phi,~ q_k\in [Q_k-\Delta Q_k,Q_k]  \Bigr\}\\
&\:= \prod_{k=1}^K \mathds{1}_{[Q_k-\Delta Q_k,Q_k]}(\hat{Q}_k) \Hilbert\,.
\end{align}
\end{subequations}
We write
\be\label{rhogmcdef}
\hat\rho_\gmc=\frac{\hat{P}_\gmc}{\tr \hat{P}_\gmc}
\ee
for the normalized projection to $\Hilbert_\gmc$. The key property, which we call the ``General Gibbs Principle'', can now be formulated as follows: 

\bigskip

\noindent{\bf General Gibbs Principle.} {\it Suppose the self-adjoint operators $\hat{Q}_1,\ldots,\hat{Q}_K$ commute with each other, and $\dim\Hilbert_\gmc$ is large. Then, in relevant senses of equivalence of ensembles, $\hat\rho_\gmc$ is equivalent to $\hat\rho_\gG$ with $\lambda_k$ chosen so that
\be\label{fixlambda}
\tr(\hat\rho_\gG \, \hat{Q}_k) = Q_k\,.
\ee
}

\bigskip

(For $K=1$ and $\hat{Q}_1=\hat{H}$, we have that $\hat{\rho}_\gG=\hat\rho_\can$ and $\hat\rho_\gmc=\hat\rho_\mc$, and the general Gibbs principle asserts that the canonical density matrix is equivalent to the micro-canonical one, provided we define $\beta$ through the relation $\tr(\hat{H} e^{-\beta \hat{H}})/\tr(e^{-\beta\hat{H}})=E$ with given energy $E$.)

\bigskip

One relevant sense of equivalence of ensembles is that, as conjectured by \cite{Rigol2}, they yield the same Born distribution for practically all observables $\hat{A}$ except very specially chosen functions of the $\hat{Q}_k$. We elucidate in Section~\ref{sec:plausibility} why this is plausible.

We focus on another sense of equivalence: We say that for a composite system $S\cup B$, two density matrices 
\be
\text{$\hat\rho_1$ and $\hat\rho_2$ are \emph{$S$-equivalent}}~\Leftrightarrow~
\tr^B \hat\rho_1 \approx \tr^B \hat\rho_2\,,
\ee
where $\tr^B$ means the partial trace over $\Hilbert^B$.\footnote{We can take $\approx$ to mean (for the sake of definiteness) closeness in the trace norm. However, other senses of closeness could be considered as well, and most of our considerations are not on the level of mathematical rigor that would correspond to fixing a particular norm.} We define further that
\be
\begin{aligned}
&\text{$\hat\rho_1$ and $\hat\rho_2$ are \emph{locally equivalent}} ~\Leftrightarrow~\\
&\text{they are $S$-equivalent for every  macro-small spatial region $S$,}
\end{aligned}
\ee
where we say of a spatial region $S\subset \RRR^3$ that
\be\label{macrosmalldef}
\begin{aligned}
&\text{$S$ is \emph{macro-small}}~\Leftrightarrow\\ 
&\text{its volume is small (on the macroscopic scale),}\\ 
&\text{$S$ is large compared to the size of a molecule, and}\\
&\text{$S$ has moderate surface.}
\end{aligned}
\ee
Here, we say of a spatial region $S$ that
\be\label{moderatedef}
\begin{aligned}
    &\text{$S$ has \emph{moderate surface}}~\Leftrightarrow\\
    &\text{surface area$(S)\leq C\,\mathrm{vol}(S)^{2/3}$ with (say) $C=100$.}
\end{aligned}
\ee
This condition expresses that the shape of $S$ is not too frayed or ``spaghetti-like''; examples of shapes with moderate surfaces include balls ($C\approx 4.8$) and cubes ($C=6$), with non-moderate surfaces cylinders that are very long (``cigars'') or very oblate (``pancakes'').

\bigskip

\noindent{\bf Conjecture.} Except for very special choices of $\hat{Q}_k$,
\be\label{conjecture}
\hat\rho_\gG \text{ is locally equivalent to } \hat\rho_\gmc\,.
\ee

\bigskip

We give in this paper a derivation of \eqref{conjecture} in the case in which each of the $\hat{Q}_k$ is (approximately) \emph{extensive}, that is, in which, for any spatial region $S$ with complement $\Sc$,
\be\label{Qkextensive}
\hat{Q}_k \approx \hat{Q}_{k}^S \otimes \hat{I}^{\Sc} + \hat{I}^S \otimes \hat{Q}_k^{\Sc} \,,
\ee
provided $S$ is large compared to the size of a molecule and has moderate surface.\footnote{If we want to make $\approx$ more precise, we may think of closeness in the operator norm; but this is not very relevant for the kind of outline provided here.} This is the first of our claims, see Statement~\hyperref[stat1a]{1a} in Section~\ref{sec:claims}. It includes the grand-canonical density matrix $\hat\rho_\gc$ as a special case (also for chemical equilibrium, see Section~\ref{sec:chemical}, but also for spatial equilibrium, where the conserved observables are simply $\hat{Q}_k=\hat{N}_k$).

There are further question about equivalence of ensembles that deserve a deeper discussion, such as how large or small the width $\Delta E$ of the energy window $[E-\Delta E,E]$ can be chosen, or about the size of fluctuations, which can matter near phase transitions; however, we will not provide such a discussion here.

\subsection{Approach to Equilibrium}

We will discuss the \emph{approach} (time evolution) toward the grand-canonical and the generalized Gibbs ensemble, both concerning the reduced density matrix and the distribution of the conditional wave function; we show that this approach occurs for every initial wave function if the Hamiltonian $\hat{H}$ satisfies the appropriate version of the eigenstate thermalization hypothesis (ETH), see Eq.s \eqref{ETHa} and \eqref{ETHb} in Statement~\hyperref[stat5]{5}. For the reduced density matrix, we also prove a mathematical theorem (Proposition~\ref{prop: MITE} in Section~\ref{sec:approach}) that somewhat generalizes the existing results in this direction
\cite{Sre_95,Sre_96,RDO_08,GHLT_17,Mori_18} (which were intended for the canonical density matrix). Concerning the approach to thermal equilibrium of the conditional wave function, no results have been in the literature before for either the canonical or the grand-canonical case.

\bigskip

The remainder of this paper is organized as follows. In Section~\ref{sec:chemical}, we elucidate the framework that will also cover chemical reactions. In Section~\ref{sec:claims}, we collect the main statements that we derive in this paper. In Section~\ref{sec:discussion}, we compare ours to other derivations. In Sections~\ref{sec: deriv 1a}--\ref{sec:approach}, we justify and discuss these statements. Appendices \ref{app:differentF}--\ref{app:ergodic} collect the mathematical proofs.

\section{Use of $\hat\rho_\gc$: Chemical Reactions}
\label{sec:chemical}

While our statements in Section~\ref{sec:claims} also apply to spatial equilibrium, their full meaning unfolds itself in the case of chemical equilibrium. Since the latter is not commonly discussed in the way we treat it here, we first set the stage in this section by describing our perspective on the use of $\hat\rho_\gc$ for chemical reactions. Specifically, we explain how the general Gibbs principle formulated around \eqref{fixlambda} can be applied here, yields $\hat\rho_\gc$, and allows us to determine the chemical equilibrium. On that basis, we can then formulate our statements in Section~\ref{sec:claims}.

Consider a macroscopic quantum system containing $r$ different chemical substances $A_1,\ldots, A_r$, and let $n_i$ with $i=1,\ldots,r$ stand for the number of molecules of substance $A_i$. As a general version of \eqref{reaction1}, suppose there are $L\in\mathbb{N}$ possible chemical reactions, given by 
\be\label{reaction2}
\nu_{\ell 1} A_1+\ldots + \nu_{\ell r} A_r \leftrightharpoons
\tilde{\nu}_{\ell 1} A_1+\ldots + \tilde{\nu}_{\ell r} A_r +\delta E_\ell
\ee
with $\ell=1,\ldots,L$, $\nu_{\ell i},\tilde{\nu}_{\ell i} \in \{0,1,2,\ldots\}$ the number of molecules $A_i$ involved, and $\delta E_{\ell}$ the amount of energy (positive, negative, or possibly zero) released (or, if negative, consumed) in this reaction. Suppose initially the system contains $n_{0i}$ molecules $A_i$ for each $i=1,\ldots,r$ and has energy in the micro-canonical interval $[E-\Delta E, E]$. The questions arise:
\be\label{question1}
\text{What are the numbers $n_{\eq,i}$ of $A_i$ in chemical equilibrium?}
\ee
And
\be\label{question2}
\text{How can grand-canonical density matrices be used for calculating them?}
\ee
We answer both in this section. Our reasoning applies not only to chemical reactions but also to ionization and to elementary particle reactions.

The appropriate Hilbert space is
\be
\Hilbert=\Fock_1\otimes \cdots \otimes \Fock_r
\ee
with
\be
\Fock_i = \bigoplus_{n_i=0}^\infty \Fock_i^{(n_i)}
\ee
the Fock space of the molecules of type $i$ and $\Fock_i^{(n_i)}$ the sector with $n_i$ molecules; it is a fermionic (bosonic) Fock space whenever the number of fermions in the molecule is odd (even), which occurs for an electrically neutral molecule (i.e., one whose number of electrons equals the number of protons) whenever the number of neutrons per molecule is odd (even). (Note that the 1-particle Hilbert space $\Hilbert_{1i}$ from which $\Fock_i$ arises may also involve internal degrees of freedom.)

\begin{rmk}
There may be a (somewhat arbitrary) choice of the theoretical physicist involved in defining which states exactly count (say, in the example of \eqref{reaction1}) as states of molecule $C$ as opposed to states of a molecule $A$ and a molecule $B$ \cite[Sec.~3]{Leb22}. We assume in the following that such a choice has been made. In the case of ionization (say, $A$ = proton, $B$ = electron, $C$ = hydrogen atom), a natural choice \cite{Gir90} would be to count the bound states (from the discrete spectrum of the relative Hamiltonian) as atoms $C$ and the scattering states (from the continuous spectrum of the relative Hamiltonian) as $A+B$.\hfill$\diamond$
\end{rmk}

Of the Hamiltonian $\hat{H}$, we assume that molecules, when separated by a sufficient (still microscopic) distance, hardly interact. Since the details of the interaction term do not matter much in the end, let us focus for a moment on the part of $\hat{H}$ that does matter, the part we need to write down for calculating $n_{\eq,i}$. A key contribution is the kinetic energy of the center of mass of each molecule; another one the external field (e.g., electric, magnetic, or gravitational) to which each molecule is subject; another one the energy contributions of internal degrees of freedom (e.g., rotational, vibrational, or librational) of each molecule. We write $\hat{H}_{1i}$ for the 1-molecule Hamiltonian combining these contributions, $\hat{H}_{0i}$ for the second quantized version of $\hat{H}_{1i}$ (i.e., the operator on $\Fock_i$ acting like $\hat{H}_{1i}$ on each particle), and $\hat{H}_0$ for the combination of those over all types $i$,
\be\label{H0H0i}
\hat{H}_0 := \sum_{i=1}^r \hat{I}_1\otimes \cdots \otimes \hat{I}_{i-1} \otimes \hat{H}_{0i} \otimes \hat{I}_{i+1}\otimes \cdots \otimes \hat{I}_r
\ee
with $\hat{I}_i$ the identity on $\Fock_i$. Another key contribution to the full Hamiltonian $\hat{H}$ comes from the \emph{rest energy} of each molecule; that is, apart from the relativistic contribution from the rest mass of the particles (which is irrelevant for chemical reactions because electron-positron or quark-antiquark pair creation do not occur), the \emph{ground state energy} $E_{0i}$ of each molecule. For example in the reaction \eqref{reaction1}, $\delta E$ represents the binding energy between $A$ and $B$ within a $C$ molecule, meaning that 
\be\label{E0C}
E_{0C}=E_{0A}+E_{0B}-\delta E\,.
\ee
Likewise in the general case \eqref{reaction2}, the $E_{0i}$ have to satisfy
\be\label{nuE}
\nu_{\ell 1}E_{01} + \ldots + \nu_{\ell r}E_{0r} = \tilde\nu_{\ell 1}E_{01} + \ldots + \tilde\nu_{\ell r}E_{0r} +\delta E_\ell
\ee
for every $\ell=1,\ldots,L$. Up to some remaining freedom, the $E_{0i}$ can be determined from the known $\delta E_\ell,\nu_{\ell i},\tilde\nu_{\ell i}$ by solving \eqref{nuE}; if a solution to \eqref{nuE} did not exist, it would signal the existence of a circle of reactions from \eqref{reaction2} that produce energy while returning the same number of each molecule, which is physically impossible. We will see in Remark~\ref{rmk:freedom} below that the freedom in the $E_{0i}$ does not affect either $\hat\rho_\gc$ or $n_{\eq,i}$. Suppose we know the $E_{0i}$; we can then define
\be\label{H*def}
\hat{H}_* := \hat{H}_0+\sum_{i=1}^r E_{0i} \, \hat{N}_i\,,
\ee
which is equivalent to adding $E_{0i}$ to the 1-particle Hamiltonian $\hat{H}_{1i}$ before the second quantization. 

Of the full Hamiltonian again,
\be
\hat{H}= \hat{H}_* + \hat{V}\,,
\ee
we assume that the interaction term $\hat{V}$ is mostly negligibly small\footnote{It would be of interest to study in more detail which limit or sense of smallness of $\hat{V}$ is needed; we use here two assumptions: that $\hat{V}$ can be neglected in a certain step of the calculation, from \eqref{31} to \eqref{32}, and that it cannot be neglected for the question of which operators $\hat{Q}_k$ commute with $\hat{H}$.}
(namely, except during a chemical reaction) and contains non-zero transition elements corresponding to every reaction \eqref{reaction2}, but not for any transition that is not of the form \eqref{reaction2}. That is, we assume that the conserved quantities of \eqref{reaction2} are the only macroscopic conserved observables. In detail, using (simple) tools from linear algebra, consider the $r$-dimensional real space $\mathcal{R}$ with axes labeled $n_1,\ldots,n_r$, and let $\mathcal{L}$ be the subspace spanned by the $L$ vectors $(\nu_{\ell 1}-\tilde{\nu}_{\ell 1}, \ldots, \nu_{\ell r}-\tilde\nu_{\ell r})$. Any change in particle numbers due to reactions of the form \eqref{reaction2} lies in $\mathcal{L}$, any conserved linear combination of $n_1,\ldots,n_r$ is orthogonal to $\mathcal{L}$. (In fact, the deeper reason we allow real rather than integer components in $\mathcal{R}$ is that we allow real coefficients in these linear combinations.) Choose a complete set $F_1,\ldots,F_{K-1}$ of conserved linear combinations with 
\be
K=\dim(\mathcal{L}^\perp)+1=r+1-\dim \mathcal{L}
\ee
(which is equal to $r+1-L$ if the $L$ vectors just mentioned are linearly independent). That is, choose
a linear mapping (a $(K-1)\times r$-matrix) $F:\mathcal{R} \to \mathbb{R}^{K-1}$ with kernel 
\be\label{Fkernel}
F^{-1}(0)=\mathcal{L} \,.
\ee
By the dimension formula
\be
\dim \mathrm{kernel}(F) + \dim \mathrm{image}(F) = r\,,
\ee
$F$ has full rank, $\mathrm{rank}(F)=\dim \mathrm{image}(F)=K-1$, and for every $k=1,\ldots,K-1$, $F_k(n_1,\ldots,n_r)$ is conserved during every reaction \eqref{reaction2}. (Put differently, the rows $F_1,\ldots,F_{K-1}$ of the matrix $F$ form a basis of $\mathcal{L}^\perp$.) As a consequence, the observables
\be\label{Qkdef}
\hat{Q}_k := F_k(\hat{N}_1,\ldots, \hat{N}_r)
\ee
are conserved, i.e., they commute with $\hat{H}$. They also commute with each other because $\hat{N}_1,\ldots,\hat{N}_r$ do. Any other conserved linear combinations of number operators are linear combinations of the $\hat{Q}_k$ because we assumed that the transitions allowed by $\hat{H}$ between different sectors $\Fock_1^{(n_1)}\otimes \cdots \otimes \Fock_r^{(n_r)}$ are exactly those allowed by the reactions \eqref{reaction2}. We assume that there are no further conserved macroscopic observables. 

Now set $\hat{Q}_K:= \hat{H}$; consider the generalized micro-canonical subspace $\Hilbert_\gmc$ as defined in \eqref{Hilbertgmcdef} with $Q_k=F_k(n_{01},\ldots,n_{0r})$ (and suitably small $\Delta Q_k$) for $k=1,\ldots,K-1$ and $Q_K=E$, $\Delta Q_K=\Delta E$; consider further the
generalized Gibbs density matrix $\hat\rho_\gG$ as defined in \eqref{rhogGdef} with the values of $\lambda_k$ chosen to satisfy \eqref{fixlambda}. 

\begin{rmk}\label{rmk:differentF}
If we had chosen $F$ differently, it would not have changed $\hat\rho_\gG$.
In contrast, if we had chosen $F$ differently, it would have changed $\hat\rho_\gmc$ (though presumably not in a very relevant way). We prove both statements in Appendix~\ref{app:differentF}.\hfill$\diamond$
\end{rmk}

Now we apply the general Gibbs principle of Section~\ref{sec:1}. We obtain that there is a $K$-parameter family of thermal equilibrium states. (For example in \eqref{reaction1}, there is a 3-parameter family of thermal equilibrium states; the three parameters can be taken to be the energy (or, for that matter, temperature), $n_A-n_B$, and $n_A+n_C$.) 
The thermal equilibrium states can be represented by the generalized Gibbs density matrix
\be\label{31}
\hat\rho_\gG= Z_\gG^{-1} \, \exp\left(\sum_{k=1}^{K-1}\lambda_k \hat{Q}_k + \lambda_K \hat{H}\right)
\ee
with real parameters $\lambda_1,\ldots,\lambda_K$.
We write $-\beta$ for $\lambda_K$; since $\hat{H}$ is bounded from below and unbounded, $\lambda_K$ must be negative and thus $\beta$ positive. Since $\hat{H}=\hat{H}_* +\hat{V}$ with $\hat{V}$ small, we can replace $\hat{H}$ by $\hat{H}_*$,
\be\label{32}
\hat\rho_\gG\approx Z_\gG^{-1} \, \exp\left(\sum_{k=1}^{K-1}\lambda_k \hat{Q}_k -\beta \hat{H}_*\right)\,.
\ee
Since $\hat{Q}_k=F_k(\hat{N}_1,\ldots,\hat{N}_r)$ and $F_k$ is a linear function 
\be
F_k(n_1,\ldots,n_r)=\sum_{i=1}^r F_{ki} n_i\,,
\ee
we can write $\hat{Q}_k=\sum_{i=1}^r F_{ki}\hat{N}_i$ and
\be\label{rhogGH*}
\hat\rho_\gG\approx Z_\gG^{-1} \, \exp\Bigl(-\beta \bigl(\hat{H}_*-\sum_{i=1}^r \mu_{*i}\hat{N}_i \bigr)\Bigr)
\ee
with
\be
\mu_{*i} =\beta^{-1}\sum_{k=1}^{K-1}\lambda_k  F_{ki}\,.
\ee
By \eqref{H*def},
\be\label{finallyrhogc}
\hat\rho_\gG\approx \hat{\rho}_\gc=Z_\gc^{-1} \, \exp\Bigl(-\beta \bigl(\hat{H}_0-\sum_{i=1}^r \mu_{0i}\hat{N}_i \bigr)\Bigr)
\ee
with $Z_\gc=Z_\gG$ and
\be\label{mu0mu*}
\mu_{0i} =\mu_{*i} - E_{0i}\,.
\ee
The right-hand side of \eqref{finallyrhogc} is exactly what is meant by the formula \eqref{rhogcdef} for the grand-canonical density matrix $\hat\rho_\gc$ that arises here as a special case of $\hat\rho_\gG$. The $r+1$ coefficients $\mu_{01},\ldots,\mu_{0r},\beta$ are determined by the $K$ conditions \eqref{fixlambda} together with the $L$ conditions 
\be\label{munucond}
\sum_{i=1}^r (\mu_{0i}+E_{0i}) (\nu_{\ell i}-\tilde{\nu}_{\ell i})=0 \,,
\ee
which follow from $F^{-1}(0)\supseteq\mathcal{L}$ (and of which $\dim \mathcal{L}$ many are independent). 
By \eqref{H0H0i}, \eqref{finallyrhogc} can be rewritten as
\be\label{rhotensor}
\hat\rho_\gc= \bigotimes_{i=1}^r \hat\rho_i := \bigotimes_{i=1}^r Z_i^{-1} \, \exp\bigl( -\beta (\hat{H}_{0i}-\mu_{0i}\hat{N}_i) \bigr)\,.
\ee
If we write $\hat{H}_{1i}$ for the 1-particle Hamiltonian whose second quantization is $\hat{H}_{0i}$, then each tensor factor in \eqref{rhotensor} is a canonical density matrix with $\hat{H}_{0i}$ replaced by the second quantization of $\hat{H}_{1i}-\mu_{0i} \hat{I}$. 
This fact also provides a direct interpretation of the physical meaning of $\mu_{0i}$: The thermal density matrix looks as if it were canonical and molecule $i$ had ground state energy $-\mu_{0i}$.

Finally, the desired numbers $n_{\eq,i}$ characterizing the chemical equilibrium are then given by
\begin{subequations}\label{neqirhogc}
\begin{align}
n_{\eq,i} 
&= \tr(\hat\rho_\gmc \hat{N}_i)\\
&\approx \tr(\hat\rho_\gc \hat{N}_i)\\ 
&= \tr_{\Fock_i}(\hat\rho_i \hat{N}_i) = \frac{\tr\bigl(\hat{N}_i \exp(-\beta (\hat{H}_{0i}-\mu_{0i}\hat{N}_i))\bigr)}{\tr\exp(-\beta (\hat{H}_{0i}-\mu_{0i}\hat{N}_i))} \label{37c}\\
&= \frac{1}{\beta} \frac{\partial}{\partial \mu_{0i}}\log Z_\gc(\beta,\mu_{01},\ldots,\mu_{0r})
\end{align}
\end{subequations}
using \eqref{H0H0i}; the function $-(1/\beta)\log Z_\gc(\beta,\mu_{01},\ldots,\mu_{0r})$ is known as the \emph{grand potential} (e.g., \cite[(5.73)]{Bal06}, \cite[(2.7.13)]{Schw_06}). We have thus provided a justified answer to the questions \eqref{question1} and \eqref{question2}. Note that, while \eqref{37c} is an expression that one might have guessed right from the start, our derivation has achieved two things: First, provided reasons for why \eqref{neqirhogc} is true, and second, provided a way of computing the $\mu_{0i}$ from the known quantities $n_{0i},\delta E_\ell$ by means of \eqref{munucond} and \eqref{fixlambda}.

\begin{rmk}\label{rmk:freedom}
    Different values of $E_{0i}$ solving \eqref{nuE} would lead to the same $\hat\rho_\gG\approx\hat\rho_\gc$ (neglecting $\hat{V}$) and the same $n_{\eq,i}$; 
    they also lead to the same $\hat\rho_\gmc$, provided $\Delta Q_k=0$ and $Q_k$ is an eigenvalue of $\hat{Q}_k$ for $k=1,\ldots,K-1$.\footnote{Otherwise, $\hat\rho_\gmc$ still stays \emph{approximately} the same, provided the density of states of $\hat{Q}_k$ is a quickly increasing function, so most eigenvectors have eigenvalue near the right end of $[Q_k-\Delta Q_k, Q_k]$.} We prove this in Appendix~\ref{app:freedom}.\hfill$\diamond$
\end{rmk}

In the example of \eqref{reaction1}, $E_{0A}$ and $E_{0B}$ can be chosen arbitrarily, $E_{0C}$ is determined by \eqref{E0C}, \eqref{munucond} reduces to $\mu_{0C}=\mu_{0A}+\mu_{0B}+\delta E$, and \eqref{fixlambda} reduces to the three conditions (with $\hat\rho_A$ etc.\ as in \eqref{rhotensor}) 
\begin{subequations}
\begin{align}
    \tr(\hat\rho_A \hat{N}_A) - \tr(\hat\rho_B \hat{N}_B) &=n_{0A}-n_{0B}\\
    \tr(\hat\rho_A \hat{N}_A) + \tr(\hat\rho_C \hat{N}_C) &=n_{0A}+n_{0C}\\
    \sum_{i=A,B,C}\Bigl[\tr(\hat\rho_i \hat{H}_{0i})+E_{0i} \tr(\hat\rho_i\hat{N}_i) \Bigr]&=E\,,
\end{align}
\end{subequations}
which together determine $\mu_{0A},\mu_{0B}$, and $\beta$. (Note that $E$ would change if we changed $E_{0A},E_{0B}$, see Remark~\ref{rmk:freedom}.)

\begin{rmk}
The reaction rates depend on $\hat{V}$, but the equilibrium state and the $n_{\eq,i}$ do not, as long as $\hat{V}$ is small. Related questions for non-small $\hat{V}$ have been studied in \cite{Fef85,CLY89}.\hfill$\diamond$
\end{rmk}

\section{Main Claims}
\label{sec:claims}

We now present our main statements. Apart from Statement~\hyperref[stat1b]{1b}, the derivations of the statements are given below in the following sections.\footnote{Note that while the derivations (and therefore also the statements) are not mathematically rigorous, we always call the ones that are rigorous a proof. Generally, while some of the arguments we use in the derivations are mathematically precise, and some could be made precise, others are on a more heuristic level.}

\subsection{Reduced Density Matrix}

Let $\Lambda \subset \RRR^3$ denote the available volume of the quantum system considered and
\be
\Sc:=\Lambda \setminus S
\ee
the spatial complement of a subset $S$ (\emph{not} the closure of a set). We take $\hat\rho_\gmc$ as defined in \eqref{rhogmcdef} and $\hat\rho_\gG$ as defined in \eqref{rhogGdef}. 
For extensive $\hat{Q}_k$ as in \eqref{Qkextensive}, it follows that
\be\label{rhogGSSc}
    \hat\rho_\gG \approx \hat\rho_\gG^S \otimes \hat\rho_\gG^{\Sc}
\ee
with $\hat\rho_\gG^S$ formed according to  \eqref{rhogGdef} from the $\hat{Q}_k^S$ and likewise for $\Sc$. This also entails that
\be\label{trScrhogG}
    \tr^{\Sc} \hat\rho_\gG \approx \hat\rho_\gG^S \,.
\ee

\begin{itemize}
    \item [] \textbf{Statement 1a. (Generalized Gibbs Density Matrix)}\label{stat1a} {\it Let $\Lambda$ be a subset of $\mathbb{R}^3$ with moderate surface as in \eqref{moderatedef}, $\Hilbert=\Hilbert^\Lambda$ a Hilbert space, and for every $S\subseteq \Lambda$ let $\Hilbert^S$ be a Hilbert space such that for $S_1\subseteq S_2\subseteq \Lambda$, $\Hilbert^{S_2}$ factorizes into $\Hilbert^{S_2}=\Hilbert^{S_1}\otimes \Hilbert^{S_2\setminus S_1}$. Suppose the self-adjoint operators $\hat{Q}_1,\ldots,\hat{Q}_K$ on $\Hilbert$ are bounded from below and commute with each other, and that each is (approximately) extensive as in \eqref{Qkextensive}.
    Then for $\lambda_1,\ldots,\lambda_K$ satisfying \eqref{fixlambda}, large $\dim \Hilbert_\gmc$, and macro-small $S\subset\Lambda$,
    \be\label{1a}
    \tr^{\Sc} \hat\rho_\gmc \approx \hat\rho_\gG^S \,.
    \ee
    In particular, by \eqref{trScrhogG}, \eqref{1a} entails that $\hat\rho_\gmc$ and $\hat\rho_\gG$ are $S$-equivalent, and thus that $\hat\rho_\gmc$ and $\hat\rho_\gG$ are locally equivalent.
    }
\end{itemize}

We give a derivation in Section~\ref{sec: deriv 1a}.

\begin{itemize}
    \item [] \textbf{Statement 1b. (Grand-Canonical Density Matrix)} \label{stat1b} {\it Let $\Hilbert=\Fock_1\otimes \cdots \otimes \Fock_r$ with $\Fock_i$ the bosonic or fermionic Fock space over the 1-particle space $\Hilbert_{1i}$ and $\Lambda\subset \mathbb{R}^3$ with moderate surface. Suppose
    \be\label{1b1}
    \hat{H}\approx \hat{H}_*=\hat{H}_0+\sum_{i=1}^r E_{0i} \hat{N}_i
    \ee
    with $\hat{H}_0$ as in \eqref{H0H0i} and $\hat{H}_{0i}$ the second quantization of a 1-particle Hamiltonian $\hat{H}_{1i}$ that is bounded from below. Suppose for every $S\subseteq \Lambda$, 
    \be\label{Hilbert1additive}
    \Hilbert_{1i}=\Hilbert_{1i}^S\oplus \Hilbert_{1i}^{\Sc}
    \ee
    and every $S\subseteq \Lambda$ with moderate surface
    \be\label{H1additive}
    \hat{H}_{1i}\approx \hat{H}_{1i}^S \oplus \hat{H}_{1i}^{\Sc} \,.
    \ee
    Suppose \eqref{reaction2} for $\ell=1,\ldots,L$ are the (only) possible reactions (i.e., changes in particle numbers), and \eqref{nuE} holds. Choose any $F$ satisfying \eqref{Fkernel}, and define $\hat{Q}_k$ by \eqref{Qkdef} (so that $[\hat{H},\hat{Q}_k]=0$) and $\hat{Q}_K=\hat{H}$. Then for $\dim\Hilbert_\gmc\gg 1$ and $\mu_{0i}$ satisfying \eqref{fixlambda} and \eqref{munucond}, 
    \be\label{1b3}
    \hat\rho_\gc:= Z_\gc^{-1} \exp\Bigl(-\beta(\hat{H}_0 - \sum_i \mu_{0i} \hat{N}_i) \Bigr)
    \ee
    is locally equivalent to $\hat\rho_\gmc$.}
\end{itemize}

Statement~\hyperref[stat1b]{1b} follows from \hyperref[stat1a]{1a} through the reasoning of Section~\ref{sec:chemical}. The condition \eqref{H1additive} is actually satisfied for reasonable 1-particle Hamiltonians, as we argue in Section~\ref{sec:union} for the Laplacian.

\begin{itemize}
\item [] \textbf{Statement 2a. \label{stat2a}(Ensemble Typicality)} {\it In the setting of Statement~\hyperref[stat1a]{1a}, most pure states $\Psi \in \SSS(\Hilbert_\gmc)$ are such that the reduced density matrix of a macro-small region $S\subset\Lambda$ is approximately a generalized Gibbs density matrix:}
\begin{equation}
    \hat\rho_\Psi^S :=\tr^{\Sc} \vert \Psi \rangle \langle \Psi \vert  \approx \hat\rho_\gG^S\,.
\end{equation}

\item [] \label{stat2b}\textbf{Statement 2b. (Grand-Canonical Typicality)} {\it In the setting of Statement~\hyperref[stat1b]{1b}, most pure states $\Psi \in \SSS(\Hilbert_\gmc)$ are such that the reduced density matrix of a macro-small region $S\subset\Lambda$ is approximately grand-canonical:}
\begin{equation}
    \hat\rho_\Psi^S = \tr^{\Sc} \vert \Psi \rangle \langle \Psi \vert  \approx \hat\rho_\gc^S\,.
\end{equation}
\end{itemize}

For the derivation, an application of a theorem of \cite{PSW_06}, see Section~\ref{sec:derivation2}. ``Most'' refers to the uniform measure over $\SSS(\Hilbert_\gmc)$. The statements show that for $\Psi\in\mathbb{S}(\Hilbert_\gmc)$ typical for the generalized micro-canonical ensemble, the reduced density matrix $\hat{\rho}_\Psi^S$ is close to $\hat{\rho}_\gG^S$ resp. $\hat{\rho}_\gc^S$ (``ensemble typicality'').

\subsection{Distribution of the Wave Function}
\label{sec:3.2}

For the next statement, we need the notion of conditional wave function $\psi^S$ \cite{DGZ_92,GN_99,GLTZ_06}, which we briefly explain now. The concept starts from the wave function $\Psi$ of $S$ and $B$ together and makes use of an orthonormal basis (ONB) $b=\{|b_1\rangle,|b_2\rangle,\ldots\}$ of $\Hilbert^B$ (or generalized ONB, such as the position basis); the conditional wave function is the random wave function
\be\label{psiSdef}
\psi^S=\mathcal{N} \langle b_J |\Psi\rangle_B\in\SSS(\Hilbert^S)\,,
\ee
where $\mathcal{N}$ is the normalizing factor, $\langle \cdot | \cdot \rangle_B$ is the partial inner product, and the basis vector $b_J$ is chosen randomly with Born distribution,
\be\label{biprob}
\PPP(J=j)=\bigl\| \langle b_j|\Psi\rangle_B \bigr\|_S^2\,.
\ee
The conditional wave function can be thought of as what the collapsed wave function of $S$ would be after an ideal quantum measurement of the basis $b$ on $B$ \cite[Footnote 2]{GLMTZ_15}.

In the following, the GAP measures play a role. For every density matrix $\hat\rho$ on a Hilbert space $\Hilbert$, the measure $\GAP_{\hat\rho}$ is a probability measure on the unit sphere of $\Hilbert$ whose associated density matrix is $\hat\rho$; it is the most spread-out measure for the given density matrix $\hat\rho$; see Section~\ref{sec:GAP} for the definition.

We now answer the question, raised in the introduction, of what $\PPP_\gc$ is, the probability distribution of the wave function corresponding to the grand-canonical ensemble. We answer it by providing the typical distribution of the conditional wave function $\psi^S$. The answer is twofold, as it depends on how the basis $b$ is chosen. In general, $\psi^S$ (and its distribution) depends on the choice of $b$, and while $b$ can be chosen arbitrarily, one is often particularly interested in taking $b$ to be the position (or configuration) basis, for several reasons: because that is the basis used in the original concept \cite{DGZ_92} of conditional wave function, because of the reasoning of Footnote~\ref{fn:cond}, and because it is the one directly relevant if Bohmian mechanics is true.\footnote{Bohmian mechanics \cite[e.g.,][]{DGZ_92} is a leading candidate theory for providing clear foundations of quantum mechanics; it does so by postulating that particles have positions, thereby giving the position basis a special status.} 
We found it to be too difficult to determine the distribution of $\psi^S$ for $b$ the position basis, but we have results involving two kinds of typical bases, which can serve as reasonable simplified \emph{models} for the position basis; both assume that $S$ is a macro-small spatial region and $B=\Sc$: one result in which $b$ is a \emph{typical} (i.e., random) ONB, and one in which $b$ is \emph{typical among those that diagonalize the conserved operators $\hat{Q}_k^{\Sc}$} (respectively, the particle number operators $\hat{N}_i^{\Sc}$). The first result (Statement~\hyperref[stat3]{3}) applies whenever $b$ is unrelated to the joint eigenbasis of $\hat{H}^{\Sc}$ and the $\hat{Q}_k^{\Sc}$ (respectively, $\hat{N}_i^{\Sc}$); the second (Statements \hyperref[stat4a]{4a} and \hyperref[stat4b]{4b}) whenever there is no simple relation between $b$ and $\hat{H}^{\Sc}$ other than that both $b$ and a suitable eigenbasis of $\hat{H}^{\Sc}$ diagonalize the $\hat{Q}_k^{\Sc}$ (the $\hat{N}_i^{\Sc}$)---this seems like a reasonable approximation for the position basis.

For any random variable $X$, let $\law_X$ (``law of $X$'') denote its probability distribution.

\begin{itemize}
\item [] \textbf{Statement 3. \label{stat3} ($\PPP_\gG$ and $\PPP_\gc$ for Typical ONB)}
{\it In the setting of Statement~\hyperref[stat1a]{1a} (or \hyperref[stat1b]{1b}) with $S$ a macro-small spatial region, for most $\Psi\in\SSS(\Hilbert_\gmc)$ and most ONBs $b$ of $\Hilbert^{\Sc}$, the conditional wave function \eqref{psiSdef} has  distribution 
\begin{equation}\label{3}
        \law_{\psi^S} \approx \GAP_{\hat{\rho}^S_{\gG}}
\end{equation}
(respectively, $\approx\GAP_{\hat\rho_\gc^S}$).}
\end{itemize}

For the derivation, an application of a theorem of \cite{GLMTZ_15}, see Section~\ref{sec:derivation3}. (The closeness of measures $\mu,\nu$ expressed by $\mu\approx\nu$ could be made precise as meaning that for a suitable class of functions $f:\SSS(\Hilbert)\to\RRR$, the $\mu$-average of $f$ is close to the $\nu$-average, see Sections~\ref{sec:GAP} and \ref{sec:derivation3}.)

\bigskip

In the following, we use the notation
\be\label{muoplusdef}
\mu=\bigoplus_j p(j) \, \mu_j
\ee
for the probability distribution that is the mixture of the probability distributions $\mu_j$ with weights $p(j) \geq 0$, $\sum_j p(j) =1$. Equivalently, if $\law_{X_j}= \mu_j$ and $\PPP(J=j)=p(j)$, then $\law_{X_J} = \mu$. We also use the notation
\be
\hat{P} (\hat{A}=a)
\ee
for the projection to the eigenspace of $\hat{A}$ with eigenvalue $a$ and
\be
\hat{P} \bigl( \hat{A}_1=a_1,\ldots, \hat{A}_r=a_r \bigr)
\ee
for the projection to the joint eigenspace of commuting $\hat{A}_1,\ldots,\hat{A}_r$. By the Born rule, the joint probability distribution of the outcomes in a simultaneous quantum measurement of the observables $\hat{A}_1,\dots,\hat{A}_r$ on a system with density matrix $\hat{\rho}$ is given by
\begin{equation}
    \mathbb{P}\bigl( \hat{A}_1=a_1,\ldots, \hat{A}_r=a_r \bigr)
    = \tr \Bigl[\hat{\rho}\hat{P}\bigl( \hat{A}_1=a_1,\ldots, \hat{A}_r=a_r \bigr)\Bigr].
\end{equation}

\begin{itemize}
\item [] \label{stat4a} \textbf{Statement 4a.  ($\PPP_\gG$ for ONB Diagonalizing $\hat{Q}_k^{\Sc}$)} {\it In the setting of Statement~\hyperref[stat1a]{1a}, suppose that $S\subset\Lambda$ is a macro-small region and that for every $k=1,\ldots,K-1$, $\Delta Q_k=0$ and $Q_k$ is an eigenvalue of $\hat{Q}_k$. Then for most $\Psi\in\SSS(\Hilbert_\gmc)$ and most ONBs $b$ of $\Hilbert^{\Sc}$ that diagonalize $\hat{Q}_1^{\Sc},\ldots,\hat{Q}_{K-1}^{\Sc}$ (but not $\hat{Q}^{\Sc}_K=\hat{H}^{\Sc}$), 
    \begin{equation}\label{4a}
        \law_{\psi^S} \approx \PPP_\gG :=\bigoplus_{q_1^S,\ldots,q_{K-1}^S} p(q_1^S,\ldots,q_{K-1}^S) ~ \GAP_{\hat{\rho}^S(q_1^S,\ldots,q_{K-1}^S)}\,,
    \end{equation}
where
\be\label{pqdef}
p(q_1^S,\ldots,q_{K-1}^S) = \tr(\hat\rho^S_\gG \hat{P})~~\text{with}~~\hat{P}:=\hat{P}\Bigl( \hat{Q}_1^S=q_1^S,\ldots,\hat{Q}_{K-1}^S=q_{K-1}^S \Bigr)
\ee
and}
\be\label{rhoqdef}
\hat{\rho}^S(q_1^S, \ldots, q_{K-1}^S) = \frac{1}{p(q_1^S,\ldots,q_{K-1}^S)} \hat{P}\, \hat\rho_\gG^S \, \hat{P}\,.
\ee

\item [] \textbf{Statement 4b. ($\PPP_\gc$ for ONB Diagonalizing $\hat{N}_i^{\Sc}$)} \label{stat4b} {\it In the setting of Statement~\hyperref[stat1b]{1b}, suppose that $S\subset\Lambda$ is a macro-small region and that for every $k=1,\ldots,K-1$, $\Delta Q_k=0$ and $Q_k$ is an eigenvalue of $\hat{Q}_k$. Then for most $\Psi\in\SSS(\Hilbert_\gmc)$ and most ONBs $b$ of $\Hilbert^{\Sc}$ that diagonalize $\hat{N}_1^{\Sc},\ldots,\hat{N}_r^{\Sc}$, 
\begin{equation}\label{4b}
        \law_{\psi^S} \approx \PPP_\gc := \bigoplus_{n_1^{\Sc},...,n_r^{\Sc}} p_n(n_1^{\Sc},\ldots,n_r^{\Sc}) ~ \GAP_{\hat{\rho}(n_1^{\Sc},\ldots,n_r^{\Sc})}\,,
    \end{equation}
where
\be\label{pndef}
p_n(n_1^{\Sc},\ldots,n_r^{\Sc}) = \tr(\hat\rho_\gmc \hat{P}_n)~~\text{with}~~\hat{P}_n:= \hat{P}\Bigl(\hat{N}_1^{\Sc}=n_1^{\Sc},\ldots,\hat{N}_r^{\Sc}=n_r^{\Sc} \Bigr)
\ee
and}
\be
\hat\rho(n_1^{\Sc}, \ldots, n_r^{\Sc}) = \frac{1}{p_n(n_1^{\Sc},\ldots,n_r^{\Sc})} \tr^{\Sc}\bigl[\hat{P}_n \, \hat\rho_\gmc \, \hat{P}_n \bigr]\,.
\ee
\end{itemize}

For the derivations, see Sections~\ref{sec:der4a} and \ref{sec:der4b}.

\begin{rmk}\label{rem:4b+}
If the general Gibbs principle  is more generally valid than under the assumptions of Statement~\hyperref[stat1a]{1a} (which is plausible as pointed out in Section~\ref{sec:plausibility}), then the expression for $\PPP_\gc$ can be simplified further:
\begin{equation}\label{4b+}
        \PPP_\gc \approx \bigoplus_{q_1^S,...,q_{K-1}^S} p(q_1^S,\ldots,q_{K-1}^S) ~ \GAP_{\hat{\rho}^S(q_1^S,\ldots,q_{K-1}^S)}\,,
\end{equation}
where
\be\label{pdef}
p(q_1^S,\ldots,q_{K-1}^S) = \tr(\hat\rho^S_\gc \hat{P})~~\text{with $\hat{P}$ as in \eqref{pqdef}}
\ee
and
\be\label{rhoSdef}
\hat\rho^S(q_1^S, \ldots, q_{K-1}^S) = \frac{1}{p(q_1^S,\ldots,q_{K-1}^S)} \hat{P} \, \hat\rho_\gc^S \, \hat{P} \,.
\ee
\hfill$\diamond$
\end{rmk}

Statement~\hyperref[stat4b]{4b} is actually not a special case of Statement~\hyperref[stat4a]{4a} because only certain combinations of $\hat{N}_i$ are conserved as the $\hat{Q}_k$, and while in Statement~\hyperref[stat4a]{4a} we considered bases diagonalizing the $\hat{Q}_k^{\Sc}$ (i.e., certain combinations of the $\hat{N}_i^{\Sc}$), we consider in Statement~\hyperref[stat4b]{4b} bases diagonalizing \emph{all} $\hat{N}_i^{\Sc}$. That is because we are interested in the position/configuration basis, which diagonalizes all $\hat{N}_i^{\Sc}$; on the other hand, Statement~\hyperref[stat4a]{4a} is the simpler statement, and can be applicable to general $\hat{Q}_k$ that are not related to number operators.

We also note in passing that while the $\oplus$ symbol for measures means merely, according to its definition \eqref{muoplusdef}, the mixture of several measures, something more is true in \eqref{4a} and \eqref{4b+}: the measures are even disjoint and, on top of that, their supports even lie in mutually orthogonal subspaces.

\subsection{Approach to Equilibrium}

The next statement concerns the approach (time evolution) towards thermal equilibrium. As many earlier works such as \cite{vN29,Rei_08b,LPSW,GLMTZ_10}, we take ``approach'' to mean that for most $t\geq 0$ in the long run, the system is in thermal equilibrium. And we take ``thermal equilibrium'' to mean that $\hat\rho_{\Psi_t}^S \approx \hat\rho_\gG^S$ and $\law_{\psi^S} \approx \PPP_\gG$. 

\begin{itemize}
\item [] \textbf{Statement 5. (Approach to Equilibrium)}\label{stat5} {\it Consider the setting of Statement~\hyperref[stat3]{3} (or \hyperref[stat4a]{4a} or \hyperref[stat4b]{4b}) with macro-small spatial region  $S\subset \Lambda$. Suppose $\hat{H}$ satisfies the eigenstate thermalization hypothesis (ETH): for eigenvectors $\phi_1,\phi_2\in\SSS(\Hilbert_\gmc)$ of $\hat{H}$ that belong to different eigenspaces of $\hat{H}$,
\begin{subequations}
\begin{align}
\tr^{\Sc} |\phi_1\rangle \langle \phi_1| &\approx \tr^{\Sc} \hat\rho_\gmc \label{ETHa}\\
\tr^{\Sc} |\phi_1\rangle \langle \phi_2| &\approx 0\,. \label{ETHb}
\end{align}
\end{subequations}
Then for every $\Psi_0\in\mathbb{S}(\Hilbert_\gmc)$ for most $t\geq 0$, the reduced density matrix of $\Psi_t=e^{-i\hat{H}t}\Psi_0$ is
\be\label{5}
\hat\rho_{\Psi_t}^S=\tr^{\Sc} |\Psi_t\rangle \langle \Psi_t| \approx \hat\rho_\gG^S~~(\text{respectively, }\hat\rho_\gc^S)\,.
\ee
Furthermore, for every $\Psi_0\in\mathbb{S}(\Hilbert_\gmc)$
for most ONBs $b$ as in Statement~\hyperref[stat3]{3} (respectively, \hyperref[stat4a]{4a} or \hyperref[stat4b]{4b}) for most $t\geq 0$, the conditional wave function $\psi^S(t)$ obtained from $\Psi_t$ is (approximately) distributed as in \eqref{3} (respectively, \eqref{4a} or \eqref{4b}).}
\end{itemize}

For the derivation, see Section~\ref{sec:approach}.

The version of ETH expressed in \eqref{ETHa}-\eqref{ETHb}, due to Srednicki \cite{Sre94}, is the one appropriate for \emph{microscopic} thermal equilibrium (``MITE'') \cite{GHLT_17}; for macroscopic thermal equilibrium, a different condition would be relevant \cite{GLMTZ_10,GHLT_17,RSTTV_25}.

\section{Discussion}
\label{sec:discussion}

\subsection{Comparison to Maximum Entropy Principle}
\label{sec:comparison}

The density matrices $\hat\rho_\can$ and $\hat\rho_\gc$ are often introduced in the literature as the maximizers of the von Neumann entropy $S_{\mathrm{vN}}$ \eqref{SvN} under the constraint $\tr(\hat\rho \hat{H})=E$ (and, if appropriate, $\tr(\hat\rho \hat{N})=n$) (e.g., \cite{Bal06,GE_16,Kar07}, classically \cite{Jay_57}). While it is mathematically correct that they \emph{are} these maximizers, a derivation based merely on this fact is considerably weaker than the one we have presented, and in fact rather questionable, for at least two reasons as follows. 

\subsubsection{Differences Between Typicality Reasoning and Maximum Entropy Principle}

First, the maximization problem leads to a representation of \emph{subjective} probability (or the knowledge of an observer), but if statistical mechanics were limited to subjective probability then it would not be justified in most applications; in particular, it could not treat any events that take place in the absence of observers (e.g., the formation of stars before humans even existed).
In contrast, there is nothing subjective about $\tr^{\Sc}|\Psi\rangle \langle \Psi|$; that is, we have shown how $\hat\rho_\gc$ actually occurs in nature.

Second, it remains unclear why one should constrain the \emph{expectation value} of a probability distribution or density matrix, even when determining the subjective probability distribution of an observer with limited information. After all, if a person was looking for their car and somehow knew only that its location has latitude $50^\circ$ N, then the subjective probability distribution would be concentrated on the circle of latitude $50^\circ$ N, rather than including other latitudes in such a fashion that the \emph{expected} latitude is $50^\circ$ N. Therefore, even for subjective probability, $\hat\rho_\mc$ or $\hat\rho_\gmc$ would be much more relevant; it is only after realizing and using the equivalence of ensembles that they can be replaced by $\hat\rho_\can$ or $\hat\rho_\gc$. A precise version of this equivalence is provided by our Statements \hyperref[stat2a]{2a} and \hyperref[stat2b]{2b}. 

\begin{rmk}\label{rmk:Balian}
    The second issue above was mentioned by Balian \cite[Sec.~4.1.2]{Bal06}, who suggested that prescribing the expectation value of an observable $\hat{A}$ made sense when considering \emph{subsystems}, of which there are many, and each may have a different value of $\hat{A}$. However, that seems like mixing up the description and the justification of $\hat{\rho}_\can$. If, as Balian's reasoning suggests, $\hat{\rho}_\can$ can only apply to subsystems $S$, while only $\hat{\rho}_\mc$ can apply to full systems, then $\hat\rho^S_\can$ can only be justified as $\tr^{\Sc}\hat\rho_\mc$, and the maximizer of $S_{\mathrm{vN}}$ for $S$ (under whichever constraints) would be irrelevant unless it agreed with $\tr^{\Sc}\hat\rho_\mc$. That is, the reasoning apparently undercuts the use of $S_\mathrm{vN}$ for justifying $\hat\rho_\can$. For us, the consideration of subsystems also plays an important role, for example through the concept of local equivalence of density matrices, but the logic is different in our case, as we do not use subjective probability.\hfill$\diamond$
\end{rmk}

\subsubsection{Origin of the Uniform Measure in Typicality Reasoning}
\label{sec:root}

Another question that arises in this context is whether $\hat\rho_\mc$ (or $u_{\SSS(\Hilbert_\mc)}$) itself has the status of subjective probability, given that it simply assigns equal weight to states with energies in $[E-\Delta E,E]$. In other words, didn't we put in an assumption saying something like ``all states in $\Hilbert_\mc$ are equally probable''? And isn't such an assumption simply an expression of insufficient knowledge and a method of guessing in the absence of knowledge?

Our answer is: no. The use of $u_{\SSS(\Hilbert_\mc)}$ has a different origin; it has nothing to do with anybody's knowledge, and it doesn't amount to making an \emph{assumption}. Rather, it is a characteristic property of \emph{thermal equilibrium} that the micro-state (here, $\Psi$) behaves like the overwhelming majority of micro-states with the same values of $E$ (and, if appropriate, $Q_1,\ldots, Q_{K-1}$). Just as most phase points on an energy surface of a classical ideal gas have a Maxwellian empirical distribution of velocities, also most wave functions in $\SSS(\Hilbert_\mc)$ (relative to the natural measure $u_{\SSS(\Hilbert_\mc)}$) look like thermal equilibrium \cite{GLMTZ_10,GHLT_17} (with certain exceptions \cite[Sec.~8]{GHLT_17}). Therefore, it can be taken as the \emph{definition} of thermal equilibrium to behave like most wave functions.
A $\Psi$ in thermal equilibrium should for this reason be typical relative to the measure $u_{\SSS(\Hilbert_\mc)}$, whose density matrix is $\hat\rho_\mc$. That is, it is the nature of thermal equilibrium that leads to $u_{\SSS(\Hilbert_\mc)}$ (respectively, $u_{\SSS(\Hilbert_\gmc)}$), and $\hat\rho_\mc$ ($\hat\rho_\gmc$).

\begin{rmk} 
Typicality reasoning has, apart from successful applications in quantum statistical mechanics (see, e.g., \cite{GE_16,Mori_18} for overviews), also gained growing attention in the philosophical literature. For more details on the conceptual meaning of typicality, as well as its application to other fields in physics see \cite{Gol12,LR15,Wil22,Laz23,Wil25}. 
\end{rmk}

\subsection{Plausibility of the General Gibbs Principle}
\label{sec:plausibility}

The reasoning in this section is of a more heuristic character than in other parts of this paper: what we present here is a mere plausibility argument.

``Equivalence of ensembles'' means that one thermodynamic ensemble can be replaced, in either classical or quantum statistical mechanics, by another; many variations of this theme are known, corresponding to different senses of ``equivalence.'' We now describe a heuristic consideration that makes it plausible that $\hat\rho_\gmc$ as in \eqref{rhogmcdef} is equivalent to $\hat\rho_\gG$ as in \eqref{rhogGdef}, without referring to a precise sense of equivalence. (A similar consideration applies already to $\hat\rho_\mc$ and $\hat\rho_\can$.)

Any ONB $\{\phi_j\}$ that jointly diagonalizes $\hat{Q}_1,\ldots,\hat{Q}_K$ also diagonalizes $\hat\rho_\gmc$ and $\hat\rho_\gG$, so the relevant question is whether their \emph{eigenvalues} ``look similar.'' Each of the two density matrices defines a Born distribution in the real space spanned by $K$ axes labeled with $q_1,\ldots,q_K$, given by
\be
p_\alpha(q_1,\ldots,q_K)=\tr\Bigl( \hat\rho_\alpha \, \hat{P}\bigl( \hat{Q}_1=q_1,\ldots, \hat{Q}_K=q_K\bigr)\Bigr)
\ee
with $\alpha=\gmc$ or $\alpha=\gG$,
and the consideration aims to show that the two probability distributions ``look similar'' by showing that both are sharply peaked around the point $(Q_1,\ldots,Q_K)$: $p_\gmc$ because it is concentrated in the rectangle $\prod_k[Q_k-\Delta Q_k,Q_k]$ and $\Delta Q_k$ is small, and $p_\gG$ because its expectation is $(Q_1,\ldots,Q_K)$ by \eqref{fixlambda}, much larger values of $q_k$ are suppressed by the exponential dependence in \eqref{rhogGdef} (provided $\lambda_k<0$), and much smaller values are suppressed provided the joint ``density of states'' of $\hat{Q}_1,\ldots,\hat{Q}_K$ (i.e., the dimension of the spectral subspace associated with a cube around $(q_1,\ldots,q_K)$ with side length $\Delta Q$ that is small on the macroscopic scale, divided by the volume $(\Delta Q)^K$ of the cube and then approximated by a smooth function of $q_1,\ldots, q_K$) increases quickly with each of the $q_k$ variables, as it would be the case for relevant examples of macroscopic observables such as energy or particle number.

It is possible to make this consideration a little bit sharper: A relevant sense of ``equivalence'' could be that 
\be\label{macroAequiv}
\text{for macroscopic observables $\hat{A}$,}~~\tr(\hat\rho_\gmc \hat{A}) \approx \tr(\hat\rho_\gG \hat{A}) \,. 
\ee
The width of the peak will be different for $p_\gmc$ and $p_\gG$:

\begin{ex}\label{ex:spread}
   In the canonical case $K=1$, $\hat{Q}_1=\hat{H}$, $\hat\rho_\gmc=\hat\rho_\mc$, and $\hat\rho_\gG=\hat\rho_\can$, the density of states of $\hat{H}=-\Delta$ is a power law $f(q_1)\propto q_1^M$ with a large exponent $M$ proportional to the particle number, and one finds that $p_\gmc(q_1) f(q_1)$ has spread of order $Q_1/M$ (see Appendix~\ref{app:spread}), while $p_\gG(q_1) f(q_1)$ has spread of order $Q_1/\sqrt{M}$, which is larger by the large factor $\sqrt{M}$.\hfill$\diamond$ 
\end{ex}

To be sure, there are observables, such as
\be
\mathds{1}_{[-1/\sqrt{M},-1/M]\cup[1/M,1/\sqrt{M}]}(\hat{Q}_1/Q_1-\hat{I})\,,
\ee
that yield $\approx 0$ for the narrower peak but $\approx 1$ for the wider peak; however, relevant macroscopic observables $\hat{A}$ should be insensitive to the width of the peak, which supports \eqref{macroAequiv}.

\subsection{Laplacian on the Union of Regions}
\label{sec:union}

Statement~\hyperref[stat1b]{1b} assumed the condition~\eqref{H1additive} that the 1-particle Hamiltonian $\hat{H}_1$ is approximately additive relative to a spatial decomposition in two regions. In this section, we argue that this is a reasonable condition that should be satisfied in practice. To this end, we consider, as a concrete example of $\hat{H}_1$, the Laplacian on the union of two regions.

So consider two spatial regions $\Lambda_1,\Lambda_2\subset \mathbb{R}^3$ bordering on each other along some surface $\Sigma$. We argue that the free 1-particle Hamiltonian associated with $\Lambda=\Lambda_1 \cup \Lambda_2$ is approximately, though not exactly, block diagonal with respect to the corresponding splitting $\Hilbert= \Hilbert_1 \oplus \Hilbert_2$ of the 1-particle Hilbert space, provided the surface $\Sigma$ is moderate (i.e., its area is not large compared to $\mathrm{vol}(\Lambda_i)^{2/3}$; if the space dimension is $d$, then 2/3 should be replaced by $(d-1)/d$).

The argument focuses on the negative Laplacian as an example of a free Hamiltonian $\hat{H}_0$. If $\hat{H}_0$ were exactly block diagonal, then a wave function initial concentrated in $\Lambda_1$ would never enter $\Lambda_2$, but it does. Here is another way of looking at the issue: the negative Laplacian in a region $\Lambda$ becomes a self-adjoint operator only if we introduce boundary conditions on the boundary $\partial\Lambda$ of $\Lambda$; suppose we always use Dirichlet boundary conditions (i.e., $\psi|_{\partial\Lambda}=0$) to define the self-adjoint operator $\Delta_\Lambda$; then it is clear that $(-\Delta_{\Lambda_1})\oplus (-\Delta_{\Lambda_2})$ (which requires $\psi$ to vanish on $\Sigma$) is different from $-\Delta_\Lambda$ (which does not require $\psi$ to vanish on $\Sigma$). Now the argument for being approximately block diagonal, i.e.,
\be
-\Delta_\Lambda \approx (-\Delta_{\Lambda_1})\oplus (-\Delta_{\Lambda_2})\,,
\ee
considers the discrete Laplacian (for simplicity in 1d) on the lattice $\varepsilon \mathbb{Z}$ with mesh width $\varepsilon>0$ and (for simplicity) $\Lambda_1= \{x\in \varepsilon\mathbb{Z}:x\leq 0\}$ and $\Lambda_2 = \{ x\in \varepsilon\mathbb{Z}:x>0\}$ (so the surface is moderate); then it notes that the difference between $-\Delta_\Lambda$ and $(-\Delta_{\Lambda_1})\oplus (-\Delta_{\Lambda_2})$, i.e., between
\be
\frac{1}{\varepsilon^2} \left(
\begin{array}{ccc|ccc}
     \ddots&-1&&&&\\
     -1&2&-1&&&\\
     &-1&2&-1&&\\\hline
     &&-1&2&-1&\\
     &&&-1&2&-1\\
     &&&&-1&\ddots\\
\end{array} 
\right)
~~\text{and}~~
\frac{1}{\varepsilon^2} \left(
\begin{array}{ccc|ccc}
     \ddots&-1&&&&\\
     -1&2&-1&&&\\
     &-1&2&&&\\\hline
     &&&2&-1&\\
     &&&-1&2&-1\\
     &&&&-1&\ddots\\
\end{array} 
\right)
\ee
consists of just two entries---very little.

\section{Derivation of Statement~\hyperref[stat1a]{1a}: Generalized Gibbs Density Matrix}
\label{sec: deriv 1a}

This derivation has parallels with considerations in \cite{Schw_06,LL_80,FG_17,Hua_87,GLTZ_06a}. However, we do not assume here that relations familiar from thermodynamics are valid.

For the purposes of this derivation, we pretend that the approximate equalities in \eqref{Qkextensive} (and thus also \eqref{rhogGSSc} and \eqref{trScrhogG}) are exact equalities. For $S\subset \Lambda$, it follows that
\begin{subequations}
\begin{align}
0&=\bigl[ \hat{Q}_k, \hat{Q}_{k'} \bigr]\\ 
&= \Bigl[ \hat{Q}_k^S \otimes \hat{I}^{\Sc}+ \hat{I}^S\otimes \hat{Q}_k^{\Sc}, \hat{Q}_{k'}^S \otimes \hat{I}^{\Sc} + \hat{I}^S \otimes \hat{Q}_{k'}^{\Sc} \Bigr]\\
&= \bigl[\hat{Q}_k^S,\hat{Q}_{k'}^S \bigr] \otimes \hat{I}^{\Sc} + \hat{I}^S \otimes \bigl[ \hat{Q}_{k}^{\Sc}, \hat{Q}_{k'}^{\Sc} \bigr]\,,
\end{align}
\end{subequations}
so
\be\label{commutatorQkS}
\bigl[\hat{Q}_k^S,\hat{Q}_{k'}^S \bigr] = 0 =  \bigl[ \hat{Q}_{k}^{\Sc}, \hat{Q}_{k'}^{\Sc} \bigr]\,.
\ee
For $X=S,\Sc$ let $\{\phi_j^X:j\in\mathscr{J}^X\}$ be an ONB of $\Hilbert^X$ that is a joint eigenbasis of all $\hat{Q}_k^X$, and let $q_{kj}^X$ be the corresponding eigenvalue. With the notation $\mathscr{J}:=\mathscr{J}^S \times \mathscr{J}^{\Sc}$ and $\phi_{jj'}:= \phi^S_j \otimes \phi_{j'}^{\Sc}$, it follows that $\{\phi_{jj'}: (j,j')\in\mathscr{J}\}$ is an ONB of $\Hilbert$ that diagonalizes (among others) $\hat{Q}_1,\ldots,\hat{Q}_K$. Let
\begin{subequations}
\label{scrJgmcdef}
\begin{align}
\mathscr{J}_\gmc&:=\Bigl\{(j,j')\in\mathscr{J}\::\:\forall k: \hat{Q}_k\phi_{jj'} = q_k\phi_{jj'} \text{ with } q_k \in [Q_k- \Delta Q_k, Q_k]\Bigr\}\\
&= \Bigl\{(j,j') \in \mathscr{J} \::\: \forall k: q_{kj}^S+q_{kj'}^{\Sc} \in [Q_k-\Delta Q_k,Q_k] \Bigr\}\,,
\end{align}
\end{subequations}
so $\mathrm{span}\{\phi_{jj'}: (j,j')\in \mathscr{J}_\gmc \}=\Hilbert_\gmc$. Thus, writing $d_\gmc := \dim \Hilbert_\gmc$,
\be
\hat\rho_\gmc = \frac{1}{d_\gmc} \sum_{(j,j')\in\mathscr{J}_\gmc} |\phi_{jj'}\rangle \langle \phi_{jj'}|
\ee
and
\begin{subequations}
\begin{align}
\tr^{\Sc}\hat\rho_\gmc 
&= \frac{1}{d_\gmc} \sum_{(j,j')\in\mathscr{J}_\gmc} \tr^{\Sc}|\phi_{jj'}\rangle \langle \phi_{jj'}| \\
&= \frac{1}{d_\gmc} \sum_{(j,j')\in\mathscr{J}_\gmc} |\phi_j^S\rangle \langle \phi_j^S|\\
&= \sum_{j\in \mathscr{J}^S} \frac{\#\{j'\in\mathscr{J}^{\Sc}:(j,j')\in \mathscr{J}_\gmc\}}{d_\gmc}|\phi_j^S\rangle \langle \phi_j^S|\\
&= \sum_{j\in \mathscr{J}^S} \frac{\#\bigl\{j'\in\mathscr{J}^{\Sc} : \forall k: q_{kj'}^{\Sc}\in [Q_k-q_{kj}^S-\Delta Q_k,Q_k-q_{kj}^S] \bigr\}}{d_\gmc}|\phi_j^S\rangle \langle \phi_j^S|\\
&= \sum_{j\in \mathscr{J}^S} \frac{\dim \Hilbert_j^{\Sc}}{d_\gmc}|\phi_j^S\rangle \langle \phi_j^S|\,, \label{trrhogmclast}
\end{align}
\end{subequations}
where $\# M$ means the number of elements of the set $M$ and
\be\label{HilbertScProd}
\Hilbert_j^{\Sc} := \prod_{k=1}^K\mathds{1}_{[Q_k-q_{kj}^S-\Delta Q_k,Q_k-q_{kj}^S]}(\hat{Q}_k^{\Sc})  \Hilbert^{\Sc}.
\ee
The Boltzmann entropy of the  macro state of $\Sc$ defined by the values $q_k^{\Sc}$ with tolerance $\Delta Q_k$ is defined by \cite{Gri,GLTZ_20}
\be\label{entropy}
S(q^{\Sc}_1,\ldots,q^{\Sc}_K) := k_\mathrm{B} \log \dim \Biggl( \prod_{k=1}^K \mathds{1}_{[q^{\Sc}_k-\Delta Q_k, q^{\Sc}_k]}(\hat{Q}_k^{\Sc}) \Hilbert^{\Sc} \Biggr)\,,
\ee
so that
\be\label{dimHj}
\dim \Hilbert_j^{\Sc} = \exp\Biggl(\frac{1}{k_\mathrm{B}} S\Bigl(Q_1-q_{1j}^S,\ldots, Q_K-q_{Kj}^S \Bigr) \Biggr).
\ee

We now approximate the $S$ function in \eqref{dimHj} by its Taylor expansion around $(Q_1,\ldots,Q_K)$. If the $\hat{Q}_k, \hat{Q}_k^S$ are positive operators, this can be justified as follows: Given that $S\subset \Lambda$ is macro-small, $\Sc$ is large compared to $S$, and positive extensive observables should be dominated by the contribution from $\Sc$, so that, for most $(j,j')\in\mathscr{J}_\gmc$ and all $k$,
\be
q_{kj}^S \ll q_{kj'}^{\Sc} \approx q_{kj}^S+q_{kj'}^{\Sc} \approx Q_k \,.
\ee
For $\hat{Q}_k$ that is not positive but merely (as we assumed in Statement~\hyperref[stat1a]{1a}) bounded from below, say by $-C_k$, shifting by $C_k$ would justify the Taylor expansion, which then also applies without the shift. 
We thus obtain, as the first-order Taylor expansion,
\be\label{Taylor}
S(Q_1-q_{1j}^S, \dots, Q_K-q_{Kj}^S) \approx S(Q_1,\dots,Q_K) - \sum_{k=1}^{K} \frac{\partial S}{\partial q^{\Sc}_k}(Q_1,\ldots,Q_K) \,  q_{kj}^S.
\ee 
Inserting into \eqref{dimHj} yields
\be
\dim\Hilbert_j^{\Sc} \approx \exp\bigl(S(Q_1,\ldots,Q_K)/k_\mathrm{B} \bigr)\: \exp\left( \sum_{k=1}^K \lambda_k q_{kj}^S\right)
\ee
with
\be\label{lambdaS}
\lambda_k := -\frac{1}{k_\mathrm{B}}\frac{\partial S}{\partial q_k^{\Sc}}(Q_1,\ldots,Q_K)\,.
\ee

Inserting in \eqref{trrhogmclast} and applying functional calculus of operators,
\be 
\tr^{\Sc}\hat\rho_\gmc \approx   \frac{1}{Z_\gG} \exp \left( \sum_{k=1}^{K} \lambda_k \hat{Q}^S_k \right),
\ee 
which is the definition of $\hat{\rho}_\gG^S$. 

For the next steps, we need the following mathematical fact, proved in Appendix~\ref{app:Qb}.

\begin{prop}\label{prop:Qb}
    Let $\Hilbert^a,\Hilbert^b,\Hilbert^c$ be finite-dimensional Hilbert spaces. If for some operators
    \be\label{Qb}
    \hat{Q}^a \otimes \hat{I}^b \otimes \hat{I}^c + \hat{I}^a \otimes \hat{Q}^{bc} = \hat{Q}^{ab}\otimes \hat{I}^c + \hat{I}^a\otimes \hat{I}^b \otimes \hat{Q}^c\,,
    \ee
    then there exists a unique operator $\hat{Q}^b$ such that
    \begin{subequations}
        \begin{align}
            \hat{Q}^{bc} &= \hat{Q}^b \otimes \hat{I}^c + \hat{I}^b \otimes \hat{Q}^c \label{Qbc}\\
            \hat{Q}^{ab} &= \hat{Q}^a \otimes \hat{I}^b + \hat{I}^a \otimes \hat{Q}^b\,.\label{Qab}
        \end{align}
    \end{subequations}
\end{prop}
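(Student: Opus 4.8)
The plan is to recover $\hat{Q}^b$ from the hypothesis \eqref{Qb} by applying a suitably normalized partial trace over one of the outer factors. Write $d_a=\dim\Hilbert^a$ and $d_c=\dim\Hilbert^c$, and apply $\tfrac{1}{d_a}\tr^a$ to both sides of \eqref{Qb}. Using the elementary identities $\tr^a(\hat{X}^a\otimes\hat{Y})=(\tr\hat{X}^a)\hat{Y}$ and $\tr^a(\hat{I}^a\otimes\hat{Y})=d_a\hat{Y}$, the term $\hat{I}^a\otimes\hat{Q}^{bc}$ survives (the factor $d_a$ being absorbed by the normalization), while the other three terms collapse to operators on $\Hilbert^b\otimes\Hilbert^c$, each of which is a single tensor product of an operator on $\Hilbert^b$ with an operator on $\Hilbert^c$. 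Rearranging the resulting identity forces
\be
\hat{Q}^{bc}=\hat{Q}^b\otimes\hat{I}^c+\hat{I}^b\otimes\hat{Q}^c
\qquad\text{with}\qquad
\hat{Q}^b:=\frac{1}{d_a}\tr^a\hat{Q}^{ab}-\frac{\tr\hat{Q}^a}{d_a}\,\hat{I}^b\,,
\ee
which is precisely \eqref{Qbc} and at the same time identifies the only possible candidate for $\hat{Q}^b$.

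It then remains to verify \eqref{Qab} for this $\hat{Q}^b$. I would substitute the product form of $\hat{Q}^{bc}$ just obtained back into the hypothesis \eqref{Qb}; the terms $\hat{I}^a\otimes\hat{I}^b\otimes\hat{Q}^c$ then cancel from the two sides, leaving $(\hat{Q}^a\otimes\hat{I}^b+\hat{I}^a\otimes\hat{Q}^b)\otimes\hat{I}^c=\hat{Q}^{ab}\otimes\hat{I}^c$. Since $\Hilbert^c\neq\{0\}$, the map $\hat{X}\mapsto\hat{X}\otimes\hat{I}^c$ is injective on operators (evaluate both sides on $\psi\otimes\chi$ for a fixed unit $\chi\in\Hilbert^c$, or apply $\tfrac{1}{d_c}\tr^c$), so $\hat{Q}^{ab}=\hat{Q}^a\otimes\hat{I}^b+\hat{I}^a\otimes\hat{Q}^b$, establishing \eqref{Qab}.

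For uniqueness, if $\hat{Q}^b_1$ and $\hat{Q}^b_2$ both satisfy \eqref{Qab}, then subtracting gives $\hat{I}^a\otimes(\hat{Q}^b_1-\hat{Q}^b_2)=0$, hence $\hat{Q}^b_1=\hat{Q}^b_2$ because $\Hilbert^a\neq\{0\}$ (equivalently, cancel $\otimes\hat{I}^c$ in \eqref{Qbc}). I do not expect a genuine obstacle here; this is essentially an exercise in the arithmetic of tensor products. The only points needing a word of care are the bookkeeping of the partial-trace identities above and the injectivity of tensoring with $\hat{I}^a$ or $\hat{I}^c$, which uses nothing beyond $\dim\Hilbert^a,\dim\Hilbert^c\ge1$. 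As a consistency check one can run the mirror-image argument with $\tfrac{1}{d_c}\tr^c$ in place of $\tfrac{1}{d_a}\tr^a$, which produces \eqref{Qab} first and \eqref{Qbc} second and yields the same $\hat{Q}^b$.
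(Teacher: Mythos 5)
Your proof is correct, and it is a mild but worthwhile variant of the paper's. The paper splits the work: for existence it rearranges \eqref{Qb} into $\hat{I}^a\otimes\hat{R}^{bc}=\hat{R}^{ab}\otimes\hat{I}^c$ and invokes the structural fact that the intersection of the subspaces $\{\hat{I}^a\otimes\hat{R}^{bc}\}$ and $\{\hat{R}^{ab}\otimes\hat{I}^c\}$ consists exactly of operators of the form $\hat{I}^a\otimes\hat{Q}^b\otimes\hat{I}^c$ — a correct but slightly tacit step — and then recovers the explicit formula $\hat{Q}^b=d_a^{-1}\bigl(\tr^a\hat{Q}^{ab}-(\tr\hat{Q}^a)\hat{I}^b\bigr)$ only in the uniqueness argument, by inserting \eqref{Qbc} and tracing out $\Hilbert^a$. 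You instead lead with the normalized partial trace $d_a^{-1}\tr^a$ applied directly to \eqref{Qb}, which produces the candidate $\hat{Q}^b$ (and \eqref{Qbc}) in one stroke, and then obtain \eqref{Qab} by substituting back and cancelling the common $\otimes\,\hat{I}^c$ factor. This buys you a proof that is fully constructive and sidesteps the intersection-of-subspaces observation; uniqueness then falls out for free from the injectivity of $\hat{X}\mapsto\hat{I}^a\otimes\hat{X}$. The underlying tool — partial trace over an outer factor — is the same in both arguments; you have simply moved it to the front, where it does double duty. Your bookkeeping of the trace identities and the cancellation of $\hat{I}^a\otimes\hat{I}^b\otimes\hat{Q}^c$ are both right, so there is no gap.
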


In the remainder of this section, we write again approximate equalities when making use of the approximate extensivity of the $\hat{Q}_k$ so that in the computations below it is clear which equalities are really exact and which are not.

As a consequence of the extensivity of the $\hat{Q}_k$, if $S_1\subseteq S_2\subseteq \Lambda$ and $S_2\setminus S_1$ all are large compared to the size of a molecule and have moderate surfaces, then from \eqref{Qkextensive} for $S_1$ and $S_2$ we obtain that
\begin{align}
\hat{Q}_k &\approx \hat{Q}_k^{S_1}\otimes \hat{I}^{S_2\setminus S_1} \otimes \hat{I}^{\Sc_2} \nonumber\\
&+ \hat{I}^{S_1} \otimes \hat{Q}_k^{S_2\setminus S_1} \otimes \hat{I}^{\Sc_2} \nonumber\\
&+ \hat{I}^{S_1} \otimes \hat{I}^{S_2\setminus S_1} \otimes \hat{Q}_k^{\Sc_2} \,,
\end{align}
and (by repeated application of Proposition~\ref{prop:Qb}) similarly for any finite partition $\Lambda=\cup_\alpha S_\alpha$ ($S_\alpha\cap S_{\alpha'}=\emptyset$ for $\alpha \neq \alpha'$) into sets that are large compared to the size of a molecule and have moderate surfaces: 
\be\label{Qkalpha}
\hat{Q}_k \approx \sum_\alpha \hat{Q}_k^{S_\alpha} \otimes \hat{I}^{\Sc_\alpha}\,.
\ee

\bigskip

For the derivation of Statement~\hyperref[stat1a]{1a}, it
remains to verify that $\lambda_k$ given by \eqref{lambdaS} agrees (approximately) with $\lambda_k$ solving \eqref{fixlambda}. Starting from \eqref{lambdaS}, subdividing the available volume $\Lambda$ into many macro-small regions $S_\alpha$ and using \eqref{Qkalpha}, we find that
\begin{subequations}
\begin{align}
\tr(\hat\rho_\gG \, \hat{Q}_k)
&\approx \sum_\alpha \tr\Bigl[\hat\rho_\gG \, (\hat{Q}_k^{S_\alpha}\otimes \hat{I}^{\Sc_\alpha}) \Bigr]\\ 
&= \sum_\alpha \tr\Bigl[\tr^{\Sc_\alpha}(\hat\rho_\gG) \, \hat{Q}_k^{S_\alpha} \Bigr]\\ 
&\approx \sum_\alpha \tr\Bigl[\tr^{\Sc_\alpha}(\hat\rho_\gmc) \, \hat{Q}_k^{S_\alpha} \Bigr]\\ 
&= \sum_\alpha \tr\Bigl[ \hat\rho_\gmc \, (\hat{Q}_k^{S_\alpha}\otimes \hat{I}^{\Sc_\alpha}) \Bigr]\\ 
&= \tr(\hat\rho_\gmc \hat{Q}_k)\\[2mm]
&\approx Q_k\,.
\end{align}
\end{subequations}
Thus, the $\lambda_k$ given by \eqref{lambdaS} satisfy \eqref{fixlambda}. This completes the derivation of Statement~\hyperref[stat1a]{1a}.

\section{Derivation of Statements \hyperref[stat2a]{2a} and \hyperref[stat2b]{2b}: Grand-Canonical Typicality}
\label{sec:derivation2}

Statement~\hyperref[stat2b]{2b} follows from \hyperref[stat1b]{1b} and \hyperref[stat2a]{2a} together, so we focus on the derivation of \hyperref[stat2a]{2a}. It is based on a theorem of Popescu, Short and Winter \cite{PSW_06}, often applied to the micro-canonical subspace but proved there for a general subspace\footnote{Indeed, as they put it in \cite{PSW_06}, p.3: ``Furthermore our principle will apply to arbitrary restrictions $R$ that have nothing to do with energy, which may lead to many interesting insights.'' As parts of our paper show, this is the case.} $\Hilbert_R$: {\it for any $\Hilbert^S,\Hilbert^B$ of finite dimension, any high-dimensional subspace $\Hilbert_R \subseteq \Hilbert^S \otimes \Hilbert^B$, and most $\Psi \in \SSS(\Hilbert_R)$,
\be\label{rhoR}
\tr^B |\Psi\rangle \langle \Psi| \approx \tr^B \hat\rho_R
\ee
with $\hat\rho_R = (\dim \Hilbert_R)^{-1} \hat{P}_R$ the normalized projection to $\Hilbert_R$.} 

(The precise statement is that for every $\eta>0$,
\be
u_{\SSS(\Hilbert_R)} \left\{ \Psi \in \SSS(\Hilbert_R): \Bigl\| \tr^B |\Psi\rangle \langle \Psi| - \tr^B \hat\rho_R \Bigr\|_{\tr} \geq \eta+\frac{d^S}{\sqrt{d_R}} \right\} \leq 4\exp \left( -\frac{\eta^2 d_R}{18\pi^3} \right)
\ee
with $d^S=\dim \Hilbert^S$,  $d_R=\dim\Hilbert_R$, and $\| \hat{M} \|_{\tr} = \tr |\hat{M}| = \tr \sqrt{\hat{M}^{*}\hat{M}}$ the trace norm of the operator $\hat{M}$.)

Applying this to $\Hilbert_R:=\Hilbert_\gmc$ with $B=\Sc$ and using that $\dim\Hilbert_\gmc$ is large, we obtain that for most $\Psi\in\SSS(\Hilbert_\gmc)$, 
\be 
 \tr^{\Sc} | \Psi \rangle \langle \Psi | \approx \tr^{\Sc} \hat{\rho}_\gmc \,.
\ee 
By Statement~\hyperref[stat1a]{1a}, $\tr^{\Sc} \hat{\rho}_\gmc \approx \hat{\rho}^S_\gG$, which yields the desired result.

\section{Derivation of Statements \hyperref[stat3]{3}, \hyperref[stat4a]{4a}, \hyperref[stat4b]{4b}: Conditional Wave Function and GAP}

Before deriving Statements \hyperref[stat3]{3}--\hyperref[stat5]{5}, we review the properties of the GAP measure (for 
a pedagogical introduction see \cite{Tum19}).

\subsection{GAP Measure}
\label{sec:GAP}

GAP measures can arise in several ways \cite{GLTZ_06,GLMTZ_15}; the one most relevant to us is as the asymptotic distribution of the conditional wave function $\psi^S$ in the setting of canonical typicality (leading to \eqref{PcanGAP}) and, of course, according to Statements \hyperref[stat3]{3}--\hyperref[stat4b]{4b}. That is, GAP measures are relevant as the distribution of wave functions for Gibbs ensembles.

For any Hilbert space $\Hilbert$ and probability measure $\mu$ on $\SSS(\Hilbert)$, the density matrix of $\mu$ is
\be 
 \hat{\rho}_\mu = \int_{\SSS(\Hilbert)} \mu(d\psi) \: \vert \psi \rangle \langle \psi \vert.
\ee
Note that the map $\mu\mapsto \hat{\rho}_\mu$ is many-to-one; there are several measures leading to the same density matrix. For any $\hat\rho$, $\GAP_{\hat\rho}$ is a particular measure with density matrix $\hat\rho$, i.e.,
\begin{equation}
    \hat{\rho}_{\GAP_{\hat{\rho}}} = \hat{\rho}.
\end{equation}
In fact, $\GAP_{\hat\rho}$ is the most spread-out measure with given $\hat\rho$ \cite{JRW_94}.
The name stems from a procedure for constructing it: G stands for Gaussian, A for adjusted and P for projected. Here is the procedure:

We write $X\sim p$ to express that the random variable $X$ has probability distribution $p$. 
Let $\hat{\rho}=\sum_m p_m |m\rangle\langle m|$ be a density matrix on a finite-dimensional Hilbert space $\Hilbert$ with eigenvalues $p_m$ and corresponding eigen-ONB $\{|m\rangle\}$. Moreover, let $(Z_m)$ be a sequence of independent complex-valued Gaussian random variables\footnote{Recall that $Z$ is a complex-valued Gaussian random variable with mean $\mu$ and variance $\sigma^2$ if and only if $\mbox{Re}Z$ and $\mbox{Im}Z$ are independent and $\mbox{Re} Z \sim\mathcal{N}(\mbox{Re}\mu,\sigma^2/2), \mbox{Im} Z\sim\mathcal{N}(\mbox{Im}\mu, \sigma^2/2)$.} with mean 0 and variances
\begin{align}
    \mathbb{E} |Z_m|^2 = p_m.
\end{align}
Then $\mathrm{G}_{\hat{\rho}}$ is the distribution of the random vector
\begin{align}
    \Psi^\mathrm{G} := \sum_m Z_m |m\rangle.
\end{align}
Note that while $\mathrm{G}_{\hat{\rho}}$ has density matrix $\hat{\rho}$, it is not a distribution on the sphere $\mathbb{S}(\Hilbert)$. However, as the eigenvalues $p_m$ sum up to 1, we immediately see that $\mathbb{E}\|\Psi^\mathrm{G}\|^2 = 1$.

If we now projected the Gaussian measure $\mathrm{G}_{\hat{\rho}}$ to the sphere $\mathbb{S}(\Hilbert)$, the resulting distribution would in general not have density matrix $\hat{\rho}$. In order to obtain a measure with the desired density matrix after projecting to the sphere, we have to adjust the density of $\mathrm{G}_{\hat{\rho}}$. More precisely, we define the Gaussian adjusted measure $\mbox{GA}_{\hat{\rho}}$ on $\Hilbert$ by
\begin{align}
    \mbox{GA}_{\hat{\rho}}(d\psi) = \|\psi\|^2 \mathrm{G}_{\hat{\rho}}(d\psi).
\end{align}
We remark that because of $\mathbb{E}\|\Psi^\mathrm{G}\|^2 = 1$, $\mbox{GA}_{\hat{\rho}}$ defines indeed a probability distribution on $\Hilbert$. 

In the last step of the construction we project $\mbox{GA}_{\hat{\rho}}$ to $\mathbb{S}(\Hilbert)$. Let $\Psi^{\mathrm{GA}}\sim\mbox{GA}_{\hat{\rho}}$. Then $\GAP_{\hat{\rho}}$ is the distribution of the random vector
\begin{align}
    \Psi^{\GAP} := \frac{\Psi^{\mathrm{GA}}}{\|\Psi^{\mathrm{GA}}\|}.
\end{align}
Note that $\Psi^{\GAP}$ is well-defined as $\Psi^{\mathrm{GA}}\neq 0$ almost surely. One can easily see that indeed $\hat{\rho}_{\GAP_{\hat{\rho}}}=\hat{\rho}$ and thus this concludes the construction of GAP measures in the finite-dimensional setting. If $\Hilbert$ is infinite-dimensional, a similar construction is possible, see \cite{Tum_20} for the details. For other constructions of $\GAP_{\hat\rho}$, see \cite{GLMTZ_15}.

Another property we use is the continuous dependence of $\GAP_{\hat\rho}$ on $\hat\rho$, which means that for two density matrices $\hat\rho,\hat\Omega$,
\be \label{GAPcont}
\text{if }\hat{\rho}\approx \hat{\Omega},\text{ then }\GAP_{\hat{\rho}}\approx \GAP_{\hat{\Omega}}\,.
\ee
We remark in passing that a precise version of this relation is known and provided by Lemma~5 in \cite{GLMTZ_15}, which says that for every $0<\varepsilon<1$, every finite-dimensional $\Hilbert$, and every continuous function $f: \SSSH \rightarrow \mathbb{R}$, there is $r=r(\varepsilon,\dim\Hilbert,f)>0$ such that for all $\hat{\rho}, \hat{\Omega}\in \mathscr{D}(\Hilbert)$
\be
\text{if} \quad || \hat{\rho} - \hat{\Omega} ||_{\tr}<r, \text{ then} \quad  |\GAP_{\hat{\rho}}(f) - \GAP_{\hat{\Omega}}(f)| < \varepsilon.
\ee
Here, we used a standard way of expressing the closeness of two measures $\mu,\nu$ on a set $\Omega$ by saying that for any  function $f:\Omega\to\RRR$ (serving as a ``test function''), the $\mu$-average of $f$
\be\label{muf}
\mu(f) :=\int_\Omega \mu(d\omega) \, f(\omega)
\ee
is close to $\nu(f)$. Different classes of test functions (e.g., bounded, continuous) define different types of closeness. 

Further investigation and discussion of GAP measures can be found in \cite[Sec.~3]{GLTZ_06}, \cite{TZ_05,Rei_08,GLMTZ_15,TTV_24,Vog_24}.

\subsection{Derivation of Statement \hyperref[stat3]{3}}
\label{sec:derivation3}

The key ingredient is Theorem 2 from \cite{GLMTZ_15}, which says that {\it for any finite-dimensional $\Hilbert^S,\Hilbert^B$ with large $\dim\Hilbert^B$ and any $\Psi\in\SSS(\Hilbert^S\otimes \Hilbert^B)$, most ONBs $b$ of $\Hilbert^B$ are such that}
\be\label{TypGAP}
\law_{\psi^S} \approx \GAP_{\tr^B |\Psi\rangle \langle\Psi|} \,.
\ee
(More precisely, it says that for every $\varepsilon>0$, $d^B:=\dim\Hilbert^B \geq \max\{4,\dim\Hilbert^S\}$, $\Psi\in\SSS(\Hilbert^S\otimes \Hilbert^B)$, and bounded measurable test function $f:\SSS(\Hilbert^S)\to\RRR$,
\be\label{TypGAP2}
u_{ONB} \Bigl\{b : \bigl|\law_{\psi^S}(f) - \GAP_{\tr^B |\Psi\rangle\langle\Psi|}(f)\bigr|< \varepsilon \, \|f\|_\infty \Bigr\} \geq 1-\frac{4}{\varepsilon^2 d^B}
\ee
with $u_{ONB}$ the uniform distribution over the set of ONBs of $\Hilbert^B$ and $\|\cdot\|_\infty$ the supremum norm.)

We apply this to $B=\Sc$. By Statement~\hyperref[stat2a]{2a}, for most $\Psi\in\SSS(\Hilbert_\gmc)$, $\tr^B |\Psi\rangle \langle\Psi| \approx \hat\rho_\gG^S$. By \eqref{GAPcont},\footnote{Theorems 3 and 4 in \cite{GLMTZ_15} provide precise versions of this reasoning.} $\GAP_{\tr^B |\Psi\rangle \langle\Psi|} \approx \GAP_{\hat\rho_\gG^S}$. Thus, $\law_{\psi^S} \approx \GAP_{\hat\rho_\gG^S}$ as claimed (and likewise in the setting of Statement~\hyperref[stat2b]{2b}).

\begin{rmk}\label{rmk:exactGAP}
Another reasoning leads to the related result that $\law_{\psi^S}$, when \emph{averaged} over $\Psi\in\SSS(\Hilbert_\gmc)$, is close to $\GAP_{\hat\rho_\gG^S}$. 
It is based on the property \cite[Sec.~3, Property 3]{GLTZ_06} that ``GAP is hereditary,'' i.e., that if $\Psi\in\SSS(\Hilbert^S \otimes \Hilbert^B)$ has distribution $\GAP_{\hat\rho^S \otimes \hat\rho^B}$, then for any fixed ONB $b$ of $\Hilbert^B$, $\EEE_\Psi \law_{\psi^S} = \GAP_{\hat\rho^S}$. The difference to \eqref{TypGAP} is twofold: First, the equality is exact, not approximate; and second, it concerns the \emph{average} over $\Psi$, while \eqref{TypGAP} is the case for \emph{most} $\Psi$. 
For our application, the exactness does not help as it gets lost in other steps. The application here looks as follows: Suppose that $\hat\rho_\gmc \approx \hat\rho_\gG$. Then the typicality measure for $\Psi$ is $u_{\SSS(\Hilbert_\gmc)}= \GAP_{\hat\rho_\gmc} \approx \GAP_{\hat\rho_\gG} \approx \GAP_{\hat\rho_\gG^S \otimes \hat\rho_\gG^{\Sc}}$ by \eqref{rhogGSSc} and \eqref{GAPcont}. By hereditarity, for any ONB $b$ of $\Hilbert^{\Sc}$, $\EEE_\Psi \law_{\psi^S}\approx \GAP_{\hat\rho_\gG^S}$.\hfill$\diamond$
\end{rmk}

\subsection{Derivation of Statement \hyperref[stat4a]{4a}}
\label{sec:der4a}

An ONB diagonalizing certain commuting observables $\hat{A}_1,\ldots,\hat{A}_n$ (such as $\hat{Q}_1^{\Sc},\ldots,\hat{Q}_{K-1}^{\Sc}$ or later $\hat{N}_1^{\Sc},\ldots,\hat{N}_r^{\Sc}$) consists of an ONB for each of their joint eigenspaces (which in the relevant cases will have high dimension). Thus, a purely random ONB diagonalizing $\hat{A}_1,\ldots,\hat{A}_n$ consists of a purely random ONB in each of the joint eigenspaces. Since the construction of $\psi^S$ involves picking a random basis vector $b_J$, and since each basis vector lies in one of the joint eigenspaces, we ask what is the probability that the chosen basis vector lies in a particular joint eigenspace (say, the one with eigenvalues $a_1,\ldots,a_n$, denoted $\Hilbert_{a_1...a_n}$). Since
\begin{subequations}
\begin{align}
\PPP(J=j) 
&=\bigl\| \langle b_j|\Psi\rangle_B \bigr\|_S^2\\
&= \bigl\langle \Psi \big| (\hat{I}^S \otimes |b_j\rangle\langle b_j|) \big| \Psi\bigr\rangle_{S\cup B}\,,
\end{align}
\end{subequations}
we find that, given a basis $b$,
\begin{subequations}
\begin{align}
\PPP\Bigl(b_J \in \Hilbert_{a_1...a_n} \Big| b \Bigr) 
&= \sum_{j:b_j \in \Hilbert_{a_1...a_n}} \PPP(J=j)\\
&= \Bigl\langle \Psi \Big|\Bigl(\hat{I}^S \otimes \!\!\! \sum_{j:b_j \in \Hilbert_{a_1...a_n}} \!\!\! |b_j\rangle\langle b_j| \Bigr)\Big| \Psi\Bigr\rangle\\
&= \langle\Psi| (\hat{I}^S \otimes \hat{P}_{a_1...a_n})| \Psi\rangle=:\tilde{p}(a_1,\ldots,a_n)\,,\label{Bornalpha}
\end{align}
\end{subequations}
which is independent of the choice of basis $b$, as it depends only on the projection $\hat{P}_{a_1...a_n}$ to the joint eigenspace.
As a consequence, the random variable $\psi^S$ can be constructed from a given $\Psi$ as follows: instead of first choosing a basis $b$ diagonalizing $\hat{A}_1,\ldots,\hat{A}_n$ in all of $\Hilbert^B$ and then a random $J$ with Born distribution, we can just as well choose first a joint eigenspace $\Hilbert_{a_1...a_n}$ with Born distribution \eqref{Bornalpha}, then a random ONB $b'$ in this subspace, and then one basis vector $b'_{J'}$ with conditional Born distribution, given that the eigenspace was $\Hilbert_{a_1...a_n}$. Finally, we can form
\be
\psi^S=\mathcal{N} \, \langle b'_{J'}|\Psi\rangle \,.
\ee
This conditional Born distribution is exactly the Born distribution associated with
\be
\Psi'= \frac{(\hat{I}^S \otimes \hat{P}_{a_1...a_n})\Psi}{\|(\hat{I}^S \otimes \hat{P}_{a_1...a_n}) \Psi\|} \,,
\ee
which can be thought of as the collapsed wave function after a simultaneous quantum measurement of $\hat{A}_1,\ldots,\hat{A}_n$ on $\Psi$ with outcomes $a_1,\ldots,a_n$. Since $b'_{J'}\in \Hilbert_{a_1...a_n}$, $\psi^S$ can just as well be obtained from $\Psi'$ instead of $\Psi$,
\be
\psi^S=\mathcal{N}' \, \langle b'_{J'}|\Psi'\rangle 
\ee
with different normalizing constant $\mathcal{N}'$. That is, after choosing $a_1,\ldots,a_n$ with Born distribution, we can simply continue with $\Psi'$, so that $\Hilbert^B$ gets replaced by $\Hilbert_{a_1...a_n}$. In particular, Theorem 2 of \cite{GLMTZ_15} (as described around \eqref{TypGAP} above) can be applied as long as $\dim\Hilbert_{a_1...a_n}\gg 1$, and yields that for most ONBs $b'$ of $\Hilbert_{a_1...a_n}$, the distribution of $\psi^S$ is approximately $\GAP_{\tr^B |\Psi'\rangle\langle \Psi'|}$. Thus, including all possible values of $a_1,\ldots,a_n$,
\be\label{lawpsiPsi'}
\law_{\psi^S} \approx \bigoplus_{a_1...a_n} \tilde{p}(a_1,\ldots,a_n) ~\GAP_{\tr^B |\Psi'\rangle\langle \Psi'|}\,.
\ee

We now apply this to the setting of Statement~\hyperref[stat4a]{4a}. Since
\begin{subequations}
\begin{align}
\Psi &\in \Hilbert_\gmc \subset \widetilde{\Hilbert}:= \prod_{k=1}^{K-1} \mathds{1}_{\{Q_k\}}(\hat{Q}_k) \Hilbert\\ 
&= \bigoplus_{q_1^S...q_{K-1}^S} \Biggl(\prod_{k=1}^{K-1} \mathds{1}_{\{q_k^S\}}(\hat{Q}_k^S) \Hilbert^S \Biggr) \otimes \Biggl(\prod_{k=1}^{K-1}\mathds{1}_{\{Q_k-q_k^S\}}(\hat{Q}_k^{\Sc}) \Hilbert^{\Sc} \Biggr)  \,,
\end{align}
\end{subequations}
we have that
\be\label{111}
\Bigl( \hat{P}(\hat{Q}_1^S=q_1^S,\ldots) \otimes \hat{I}^{\Sc}\Bigr)\Big|_{\widetilde{\Hilbert}} = 
\Bigl(\hat{I}^S\otimes\hat{P}(\hat{Q}_1^{\Sc}=Q_1-q_1^S,\ldots)\Bigr)\Big|_{\widetilde{\Hilbert}} \,.
\ee
Thus,
\be
\tilde{p}(Q_1-q_1^S,\ldots)=p(q_1^S,\ldots)
\ee
as defined in \eqref{pqdef}, and (taking $\hat{P}(\cdots)$ to mean operators on $\Hilbert$ or $\Hilbert^S$ or $\Hilbert^{\Sc}$ depending on the context)
\begin{subequations}
\begin{align}
\tr^{\Sc} |\Psi'\rangle \langle \Psi'| 
&= (\mathcal{N}')^2 \tr^{\Sc}\Biggl(\hat{P}(\hat{Q}_1^{\Sc}=Q_1-q_1^S,\ldots)~ |\Psi\rangle \langle \Psi|~\hat{P}(\hat{Q}_1^{\Sc}=Q_1-q_1^S,\ldots) \Biggr)\\
&=(\mathcal{N}')^2 \tr^{\Sc}\Biggl(\hat{P}(\hat{Q}_1^S=q_1^S,\ldots)~ |\Psi\rangle \langle \Psi|~\hat{P}(\hat{Q}_1^S=q_1^S,\ldots) \Biggr)\\
&=(\mathcal{N}')^2 \hat{P}(\hat{Q}_1^S=q_1^S,\ldots) ~\tr^{\Sc}\Bigl( |\Psi\rangle \langle \Psi| \Bigr)~\hat{P}(\hat{Q}_1^S=q_1^S,\ldots)\\
&\approx(\mathcal{N}')^2 \hat{P}(\hat{Q}_1^S=q_1^S,\ldots) ~\hat\rho_\gG^S~\hat{P}(\hat{Q}_1^S=q_1^S,\ldots) = \eqref{rhoqdef}
\end{align}
\end{subequations}
for most $\Psi\in\SSS(\Hilbert_\gmc)$ by Statement~\hyperref[stat2a]{2a}. By \eqref{GAPcont}, we obtain Statement~\hyperref[stat4a]{4a}.

\subsection{Derivation of Statement \hyperref[stat4b]{4b}}
\label{sec:der4b}

We apply \eqref{lawpsiPsi'} to $n=r$ and $\hat{A}_i=\hat{N}_i^{\Sc}$. By Theorem 1 in \cite{TTV_22} for $t=0$, for most $\Psi\in\SSS(\Hilbert_\gmc)$, $\tilde{p}(n_1^{\Sc},\ldots,n_r^{\Sc})\approx \tr(\hat\rho_\gmc \hat{P}_n)=p_n(n_1^{\Sc},\ldots,n_r^{\Sc})$. Given $n_1^{\Sc},\ldots,n_r^{\Sc}$, $\Psi'$ is GAP-distributed with density matrix $\hat{P}_n \hat\rho_\gmc \hat{P}_n/p_n(n_1^{\Sc},\ldots,n_r^{\Sc})$. (This can be shown.) By the results of \cite{TTV_24}, $\tr^{\Sc} |\Psi'\rangle \langle \Psi'| \approx \hat\rho(n_1^{\Sc},\ldots,n_r^{\Sc})$ with probability close to 1. Thus, \eqref{lawpsiPsi'} takes the form \eqref{4b}. This completes the derivation of Statement~\hyperref[stat4b]{4b}.

\bigskip

Remark~\ref{rem:4b+} can now be obtained as follows: When acting on $\Hilbert_\gmc$, $\hat{I}^S \otimes \hat{P}_n$ can be replaced by $\hat{P} \otimes \hat{P}_n$ with $\hat{P}$ as in \eqref{pqdef} and $q_k^S := Q_k-F_k(n_1^{\Sc},\ldots,n_r^{\Sc})$. Given that we can otherwise replace $\hat\rho_\gmc$ by $\hat\rho_\gG \approx \hat\rho_\gc$, and using \eqref{rhogGSSc}, we find that
\begin{subequations}
\begin{align}
p_n(n_1^{\Sc},\ldots,n_r^{\Sc})
&= \tr\bigl[\hat\rho_\gmc
(\hat{I}^S \otimes \hat{P}_n)\bigr]\\ 
&= \tr\bigl[\hat\rho_\gmc
(\hat{P} \otimes \hat{P}_n)\bigr]\\ 
&\approx \tr\bigl[ \hat\rho_\gG (\hat{P} \otimes \hat{P}_n)\bigr] \\
&=\tr(\hat\rho_\gG^S \hat{P}) \, \tr(\hat\rho_\gG^{\Sc}\hat{P}_n)
\end{align}
\end{subequations}
and in the same way $\hat\rho(n_1^{\Sc},\ldots,n_r^{\Sc})\approx  \hat\rho^S(q_1^S,\ldots,q_{K-1}^S)$ as in \eqref{rhoqdef}. Thus, all summands in \eqref{4b} with equal values of $q_1^S,\ldots,q_{K-1}^S$ are multiples of the same distribution $\GAP_{\hat\rho^S(q_1^S...q_{K-1}^S)}$, and thus can be combined into a single summand with weight
\be
\sum_{\substack{n_1^{\Sc}...n_r^{\Sc}:\\
Q_k-F_k(n_1^{\Sc},\ldots,n_r^{\Sc})=q_k^S}} p_n(n_1^{\Sc},\ldots, n_r^{\Sc}) \approx \tr(\hat\rho_\gG^S \hat{P})= p(q_1^S,\ldots,q_{K-1}^S)\,,
\ee
which leads to the equations of Remark~\ref{rem:4b+}.

\begin{ex} 
As a simple example we consider the case of one particle species $(r=1)$. In this situation there is no notion of chemical equilibrium but only the one of spatial equilibrium.

Let $\Hilbert = \Hilbert^S \otimes \Hilbert^{\Sc}$ and let $\hat{Q}_1 = \hat{N}$ be the particle number operator of the total system. Obviously, $\hat{N}$ is (exactly) extensive and thus we can decompose it as
\begin{align}
\hat{N} = \hat{N}^S \otimes \hat{I}^{\Sc} + \hat{I}^S \otimes \hat{N}^{\Sc},
\end{align}
where $\hat{N}^S$ and $\hat{N}^{\Sc}$ are the particle number operators of $S$ and $\Sc$, with eigenvalues $n^S$ and $n^{\Sc}$, acting on $\Hilbert^S$ and $\Hilbert^{\Sc}$ respectively. 

Let $\Delta N = 0$ and let $N\in\mathbb{N}$ be an eigenvalue of $\hat{N}$, so $\Hilbert_\gmc$ is characterized by the (exact) total particle number and a micro-canonical energy interval.
By Statement~\hyperref[4b]{4b}, for most $\Psi\in\mathbb{S}(\Hilbert_{\gmc})$ and most ONBs $b$ of $\Hilbert^{\Sc}$ that diagonalize $\hat{N}^{\Sc}$,
\begin{align}
	\law_{\psi^S} \approx \bigoplus_{n^{\Sc}=0}^N p_n(n^{\Sc}) \, \GAP_{\hat{\rho}(n^{\Sc})},\label{eq: law ex}
\end{align}
where
\begin{align}
	p_n(n^{\Sc}) = \tr(\hat{\rho}_\gmc \hat{P}_n), \qquad \hat{\rho}(n^{\Sc}) = \frac{1}{p_n(n^{\Sc})} \tr^{\Sc}\left[\hat{P}_n \hat{\rho}_\gmc \hat{P}_n\right],
\end{align}
with $\hat{P}_n=\hat{P}(\hat{N}^{\Sc}=n^{\Sc})$.
These quantities can be evaluated more explicitly: with the help of \eqref{111} and Statement~\hyperref[stat1b]{1b} we find that
\begin{subequations}
\begin{align}
	p_n(n^{\Sc}) & =\tr\left(\left(\hat{P}(\hat{N}^S = N-n^{\Sc})\otimes \hat{I}^{\Sc}\right)\hat{\rho}_\gmc\right)\\
	& =  \tr^{S}\left(\hat{P}(\hat{N}^S = N-n^{\Sc}) \tr^{\Sc}(\hat{\rho}_\gmc)\right)\\
	& \approx \frac{1}{Z_\gc} \tr^S \left(\hat{P}(\hat{N}^S=N-n^{\Sc}) \tr^{\Sc}\left(e^{-\beta(\hat{H}_0 - \mu \hat{N})}  \right)\right)\\
	&= \frac{1}{Z_\gc^S} e^{\beta\mu (N-n^{\Sc})}\tr^S\left(\hat{P}(\hat{N}^S=N-n^{\Sc}) e^{-\beta \hat{H}_0^{S}} \right),\label{ex2last1}
\end{align}
\end{subequations}
where $\hat{H}_0^S$ is the part of $\hat{H}_0$ acting only on $S$ and $\mu=\mu_{01}$. Likewise,
\begin{subequations}
\begin{align}
	\hat{\rho}(n^{\Sc}) &= \frac{1}{p_n(n^{\Sc})} \tr^{\Sc}\left(\hat{P}(\hat{N}^{\Sc}=n^{\Sc})\hat{\rho}_\gmc\right)\\
	&=\frac{1}{p_n(n^{\Sc})} \hat{P}(\hat{N}^S=N-n^{\Sc}) \Bigl(\tr^{\Sc}\hat{\rho}_\gmc\Bigr) \hat{P}(\hat{N}^S=N-n^{\Sc})\\
	&\approx \frac{1}{Z_\gc^S}\frac{1}{p_n(n^{\Sc})} \hat{P}(\hat{N}^S=N-n^{\Sc}) \, e^{-\beta(\hat{H}_0^S-\mu\hat{N}^S)} \, \hat{P}(\hat{N}^S=N-n^{\Sc})\\
	&= \frac{1}{Z_\gc^S  p_n(n^{\Sc})} e^{\beta\mu(N-n^{\Sc})} \hat{P}(\hat{N}^S=N-n^{\Sc})\, e^{-\beta\hat{H}_0^S} \, \hat{P}(\hat{N}^S=N-n^{\Sc}).\label{ex2last2}
\end{align}
\end{subequations}
Therefore, $\hat{\rho}(n^{\Sc})$ can be interpreted as a canonical density matrix of the system $S$ with $N-n^{\Sc}$ particles and thus \eqref{eq: law ex} is a mixture of GAP measures with canonical density matrices corresponding to different particle numbers in $S$.

As a by-product, we observe that \eqref{pdef} reduces in this example to \eqref{ex2last1} with $q_1^S=n^S=N-n^{\Sc}$ because $\hat{Q}_1=\hat{N}$, and \eqref{rhoSdef} reduces to \eqref{ex2last2}. Thus, the calculation above amounts to an independent confirmation of Remark~\ref{rem:4b+} in this case that does not assume the validity of the general Gibbs principle.\hfill$\diamond$
\end{ex}

\section{Derivation of Statement \hyperref[stat5]{5}: Approach to Equilibrium}
\label{sec:approach}

\subsection{Mathematical Tools}

For deriving Statement \hyperref[stat5]{5}, we make use of the following three mathematical propositions. 

The first one provides a large part of the first statement in Statement~\hyperref[stat5]{5}: it states that a version of the eigenstate thermalization hypothesis (ETH) suitable for microscopic thermal equilibrium (MITE) implies that the reduced density matrix $\hat{\rho}^S_{\Psi_t}$ approaches $\tr^{\Sc}\hat{\rho}_\gmc$ for every initial state $\Psi_0\in\mathbb{S}(\Hilbert_\gmc)$. The MITE-ETH was formulated in Statement~\hyperref[stat5]{5} as \eqref{ETHa} and \eqref{ETHb} and is formulated in Proposition~\ref{prop: MITE} in a more precise version as \eqref{ineq: ETH1} and \eqref{ineq: ETH2}.

For the mathematical formulation, we recall the precise meaning of ``most'' \cite{vN29,GLMTZ_10b}: One says that a statement $s(x)$ is true \emph{for $(1-\varepsilon)$-most $x\in X$} relative to the normalized measure $\mu$ on $X$ if and only if
\be
\mu\bigl\{x\in X:s(x)\bigr\}\geq 1-\varepsilon\,.
\ee
One says that a statement $s(t)$ is true \emph{for $(1-\varepsilon)$-most $t\in[0,\infty)$} if and only if
\be
\liminf_{T\to \infty} \frac{1}{T} \Bigl| \bigl\{t\in[0,\infty): s(t) \bigr\} \Bigr| \geq 1-\varepsilon\,,
\ee
where $|M|$ denotes the length (Lebesgue measure) of the set $M\subseteq \RRR$.

\begin{prop}\label{prop: MITE}
Let $\varepsilon,\delta>0$, $\Hilbert = \Hilbert^S \otimes\Hilbert^B$, let $\Hilbert_R\subset\Hilbert$ be a subspace, $\hat{P}_R$ the projection to $\Hilbert_R$, and $\hat{H}$ a Hamiltonian on $\Hilbert$ such that $[\hat{H},\hat{P}_R]=0$. Moreover, suppose (MITE-ETH) that for any eigenvectors $\phi_1,\phi_2\in\mathbb{S}(\Hilbert_R)$ of $\hat{H}$ with eigenvalues $e_1\neq e_2$,
\begin{align}
    \Bigl\|\tr^B |\phi_1\rangle\langle\phi_1| - \tr^B\hat{\rho}_R\Bigr\|_2 < \varepsilon,\label{ineq: ETH1}\\
    \Bigl\|\tr^B |\phi_1\rangle\langle\phi_2| \Bigr\|_2 <\varepsilon,\label{ineq: ETH2}
\end{align}
where $\|\cdot\|_2$ denotes the Hilbert-Schmidt norm. Then every $\psi_0\in\mathbb{S}(\Hilbert_R)$ is such that for $(1-\delta)$-most $t\in[0,\infty)$,
\begin{align}
    \Bigl\|\tr^B|\psi_t\rangle\langle\psi_t|-\tr^B\hat{\rho}_R \Bigr\|_2 < 2\varepsilon\sqrt{\frac{D_G}{\delta}}\,,
\end{align}
where $D_G$ is the maximal degeneracy of eigenvalue gaps, 
\be\label{DGdef}
D_G := \max_{E\in\RRR} \# \bigl\{ (e,e')\in\mathcal{E}\times\mathcal{E}: e\neq e' \text{ and }e-e'=E \bigr\}\,,
\ee
and $\mathcal{E}$ the set of eigenvalues of $\hat{H}$.
\end{prop}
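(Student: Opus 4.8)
The plan is to run the classical von Neumann--style ``equilibration for most times'' argument at the level of the reduced density matrix in Hilbert--Schmidt norm, with the gap degeneracy $D_G$ entering through a counting step. Since $[\hat{H},\hat{P}_R]=0$, the operator $\hat{H}$ restricts to a self-adjoint operator on $\Hilbert_R$; let $\mathcal{E}_R$ be its distinct eigenvalues there and $\hat{P}_e$ the corresponding spectral projections onto subspaces of $\Hilbert_R$, so that $\psi_t=\sum_{e\in\mathcal{E}_R}e^{-iet}\hat{P}_e\psi_0$ for $\psi_0\in\mathbb{S}(\Hilbert_R)$. I would write $\hat{P}_e\psi_0=a_e v_e$ with $a_e=\|\hat{P}_e\psi_0\|\geq 0$ and (for $a_e>0$) $v_e\in\mathbb{S}(\Hilbert_R)$ a \emph{single} unit eigenvector of $\hat{H}$ with eigenvalue $e$ --- the key point about degeneracies being that, whatever the multiplicity of $e$, the vector $\hat{P}_e\psi_0$ spans a one-dimensional subspace, so that the MITE-ETH hypotheses \eqref{ineq: ETH1}--\eqref{ineq: ETH2}, stated for individual eigenvectors, apply to $v_e$. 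Setting $\hat{M}_{ee'}:=\tr^B(\hat{P}_e|\psi_0\rangle\langle\psi_0|\hat{P}_{e'})=a_e a_{e'}\,\tr^B|v_e\rangle\langle v_{e'}|$ (terms with $a_e=0$ vanish), we get $\tr^B|\psi_t\rangle\langle\psi_t|=\hat\rho_{\mathrm d}+\hat\rho_{\mathrm o}(t)$ with the time-independent diagonal part $\hat\rho_{\mathrm d}:=\sum_e\hat{M}_{ee}$ and the off-diagonal part $\hat\rho_{\mathrm o}(t):=\sum_{e\neq e'}e^{-i(e-e')t}\hat{M}_{ee'}$.

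The diagonal part is easy: $\hat\rho_{\mathrm d}=\sum_e a_e^2\,\tr^B|v_e\rangle\langle v_e|$ is a convex combination (since $\sum_e a_e^2=\|\psi_0\|^2=1$) of the operators $\tr^B|v_e\rangle\langle v_e|$, each of which lies within $\varepsilon$ of $\tr^B\hat\rho_R$ in $\|\cdot\|_2$ by \eqref{ineq: ETH1}; hence $\|\hat\rho_{\mathrm d}-\tr^B\hat\rho_R\|_2<\varepsilon$ by the triangle inequality.

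For the off-diagonal part I would bound its time average, which exists because $\|\hat\rho_{\mathrm o}(t)\|_2^2$ is a trigonometric polynomial. Expanding $\|\hat\rho_{\mathrm o}(t)\|_2^2$ in the Hilbert--Schmidt inner product $\langle\cdot,\cdot\rangle_{\mathrm{HS}}$ and averaging over $t\in[0,\infty)$ kills every cross term except those with matching gaps $e-e'=f-f'$, leaving
\[
\lim_{T\to\infty}\frac1T\int_0^T\|\hat\rho_{\mathrm o}(t)\|_2^2\,dt
=\sum_{\substack{e\neq e',\,f\neq f'\\ e-e'=f-f'}}\langle\hat{M}_{ee'},\hat{M}_{ff'}\rangle_{\mathrm{HS}}.
\]
Now \eqref{ineq: ETH2} gives $\|\hat{M}_{ee'}\|_2=a_e a_{e'}\|\tr^B|v_e\rangle\langle v_{e'}|\|_2<a_e a_{e'}\varepsilon$ for $e\neq e'$; bounding each inner product by Cauchy--Schwarz, grouping the pairs $(e,e')$ by their common gap $G=e-e'\neq0$, and using $\#\{(e,e'):e-e'=G\}\le D_G$ together with a second Cauchy--Schwarz yields
\[
\lim_{T\to\infty}\frac1T\int_0^T\|\hat\rho_{\mathrm o}(t)\|_2^2\,dt
\le\varepsilon^2\sum_{G\neq0}\Bigl(\sum_{(e,e'):\,e-e'=G}a_e a_{e'}\Bigr)^{2}
\le\varepsilon^2 D_G\sum_{e\neq e'}a_e^2 a_{e'}^2\le\varepsilon^2 D_G.
\]
Markov's inequality applied to the nonnegative function $t\mapsto\|\hat\rho_{\mathrm o}(t)\|_2^2$ then shows that $\{t:\|\hat\rho_{\mathrm o}(t)\|_2\ge\varepsilon\sqrt{D_G/\delta}\}$ has upper density at most $\delta$, i.e.\ for $(1-\delta)$-most $t$ one has $\|\hat\rho_{\mathrm o}(t)\|_2<\varepsilon\sqrt{D_G/\delta}$. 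Combining with the diagonal estimate via the triangle inequality gives $\|\tr^B|\psi_t\rangle\langle\psi_t|-\tr^B\hat\rho_R\|_2<\varepsilon+\varepsilon\sqrt{D_G/\delta}\le2\varepsilon\sqrt{D_G/\delta}$ for $(1-\delta)$-most $t$, the last step using $D_G\ge1$ and the harmless assumption $\delta\le1$ (for $\delta>1$ the claim is vacuous).

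I expect the only genuine obstacle to be the combinatorial core of the off-diagonal bound: arranging the two applications of Cauchy--Schwarz so that the gap degeneracy enters as exactly one factor $D_G$ (rather than $D_G^2$ or $\dim\Hilbert_R$) and so that the surviving sum collapses to $\sum_{e\neq e'}a_e^2 a_{e'}^2\le(\sum_e a_e^2)^2=1$. The rest is bookkeeping: the remark above that $\hat{P}_e\psi_0$ is proportional to a single unit eigenvector even for degenerate $e$ (which is what makes the individual-eigenvector hypotheses \eqref{ineq: ETH1}--\eqref{ineq: ETH2} applicable), and the observation that restricting $D_G$ to gaps among the eigenvalues actually present in $\Hilbert_R$ only improves the bound.
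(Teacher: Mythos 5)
Your proof is correct and follows essentially the same strategy as the paper's: decompose $\tr^B|\psi_t\rangle\langle\psi_t|$ into its time-independent (dephased) part and the oscillating off-diagonal part, bound the former within $\varepsilon$ of $\tr^B\hat\rho_R$ via \eqref{ineq: ETH1}, bound the time-averaged squared Hilbert--Schmidt norm of the latter by $D_G\varepsilon^2$ via \eqref{ineq: ETH2} and the gap-degeneracy count, then finish with Markov's inequality and the triangle inequality. The one place you differ is in how $D_G$ is extracted: the paper, following Short and Farrelly, introduces the $0$-$1$ matrix $R_{\alpha\beta}=\delta_{\Delta_\alpha,\Delta_\beta}$ and bounds the quadratic form $\sum_{\alpha,\beta}v_\alpha R_{\alpha\beta}v_\beta \leq \|R\|\sum_\alpha v_\alpha^2$ via the operator-norm estimate $\|R\|\leq D_G$, whereas you group off-diagonal pairs by their common gap $G$ and apply Cauchy--Schwarz twice directly, which is in effect an unpacked proof of $\|R\|\leq D_G$ for a nonnegative vector $(v_\alpha)$. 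Your version is therefore a bit more elementary and self-contained (no appeal to the $\|\cdot\|_1$/$\|\cdot\|_\infty$ interpolation for $\|R\|$), while the paper's phrasing is more modular. Your remark that $\hat{P}_e\psi_0$ spans only a one-dimensional subspace even for degenerate $e$ --- so the per-eigenvector ETH hypotheses apply to the normalized vector $v_e=\hat{P}_e\psi_0/\|\hat{P}_e\psi_0\|$ --- is exactly the normalization the paper performs implicitly; stating it explicitly is a welcome clarification.
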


Proposition~\ref{prop: MITE} expresses that for every initial state $\psi_0\in\mathbb{S}(\Hilbert_R)$, for most times $t\geq 0$, $\hat{\rho}^S_{\psi_t} \approx \tr^B\hat{\rho}_R$. The proof can be found in Appendix~\ref{app:proof}. It is an adaption and generalization of the one given in \cite{Sre_96,RDO_08,GHLT_17} for the case that $\hat{\rho}_R=\hat{\rho}_{\mc}$ and the Hamiltonian has non-degenerate eigenvalues and eigenvalue gaps.

Another proposition concerns the commutation of quantifiers. It is helpful to introduce the symbol $\formost$ to denote ``for most.'' If $A(x,y)$ is any statement about objects $x$ and $y$, then ``$\forall x \forall y:A(x,y)$'' is equivalent to ``$\forall y \forall x:A(x,y)$.'' Likewise, ``$\formost x \formost y:A(x,y)$'' is equivalent to ``$\formost y \formost x:A(x,y)$'' \cite[Footnote 7]{GLMTZ_10b} if both $\formost x$ and $\formost y$ refer to probability spaces (i.e., normalized measures); the situation is more subtle when talking about most times $t\in [0,\infty)$ because the uniform measure on $[0,\infty)$ is not normalizable. In fact, ``$\formost x \formost t: A(x,t)$'' implies that ``$\formost t\formost x:A(x,t)$'' but in general not vice versa \cite[Footnote 5]{Vog_24}. The following proposition provides a sufficient condition for commuting $\formost t$ from the left to the right of $\formost x$. 

\begin{prop}\label{prop:vertauschen}
Suppose $(X,\mathscr{F},\PPP)$ is a probability space, $C>0$, $F:X\times [0,\infty) \to [0,C]$ is a measurable function, $0<\varepsilon<1$, $0<\delta_x<1$, and $0<\delta_t<1$. Suppose further that for every $x\in X$, the time average
\be\label{Tlimexists}
\overline{F(x,t)}:=\lim_{T\to\infty} \frac{1}{T} \int_0^T dt \, F(x,t)\text{ exists.}
\ee
If $(1-\delta_t)$-most $t\in[0,\infty)$ are such that for $(1-\delta_x)$-most $x\in X$, $F(x,t)\leq \varepsilon$, then $(1-\delta'_x)$-most $x\in X$ are such that for $(1-\delta'_t)$-most $t\in[0,\infty)$, $F(x,t)\leq \varepsilon'$ with any positive $\varepsilon', \delta_t', \delta'_x$ such that
\be\label{delta'x}
 \varepsilon' \delta'_t \delta'_x\leq\varepsilon + C (\delta_x + \delta_t)
\ee
(e.g., $\varepsilon'=\delta'_t=\delta'_x=(\varepsilon+C(\delta_x+\delta_t))^{1/3}$).
\end{prop}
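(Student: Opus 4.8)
The plan is to collapse the time variable by passing to a time average and then apply two Markov-type estimates. For $x\in X$ set $\Phi(x):=\overline{F(x,t)}$, the time average granted by \eqref{Tlimexists}; since $F$ is jointly measurable and bounded by $C$, $\Phi$ is a well-defined measurable function $X\to[0,C]$, and, by Fubini, both $t\mapsto\EEE[F(\cdot,t)]$ and the ``good-time'' set $G:=\{t\ge0:\PPP(\{x:F(x,t)\le\varepsilon\})\ge1-\delta_x\}$ are measurable as well; here $\EEE$ denotes expectation with respect to $\PPP$. In this notation the hypothesis reads $\liminf_{T\to\infty}\frac1T\,|G\cap[0,T]|\ge1-\delta_t$, and therefore also $\limsup_{T\to\infty}\frac1T\,|[0,T]\setminus G|\le\delta_t$.

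The first step is to bound $\EEE[\Phi]$. Because $0\le\frac1T\int_0^T F(x,t)\,dt\le C$ and this converges to $\Phi(x)$ for every $x$, bounded convergence together with Fubini gives
\[
\EEE[\Phi]=\lim_{T\to\infty}\frac1T\int_0^T\EEE[F(\cdot,t)]\,dt\,.
\]
For $t\in G$ one has $\PPP(\{x:F(x,t)>\varepsilon\})\le\delta_x$, so splitting the expectation over $\{F\le\varepsilon\}$ and its complement yields $\EEE[F(\cdot,t)]\le\varepsilon+C\delta_x$; and $\EEE[F(\cdot,t)]\le C$ for all $t$. Splitting the time integral over $G\cap[0,T]$ and $[0,T]\setminus G$ and letting $T\to\infty$ (the limit exists by the display), one gets $\EEE[\Phi]\le(\varepsilon+C\delta_x)+C\delta_t=:A$.

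The second step is Markov's inequality in $x$: $\PPP(\{x:\Phi(x)\ge\varepsilon'\delta'_t\})\le A/(\varepsilon'\delta'_t)$. The third step is a Chebyshev estimate in $t$ at fixed $x$: from $\varepsilon'\frac1T\bigl|\{t\le T:F(x,t)>\varepsilon'\}\bigr|\le\frac1T\int_0^T F(x,t)\,dt\to\Phi(x)$, whenever $\Phi(x)<\varepsilon'\delta'_t$ the upper density of $\{t:F(x,t)>\varepsilon'\}$ is at most $\Phi(x)/\varepsilon'<\delta'_t$, i.e.\ $F(x,t)\le\varepsilon'$ holds for $(1-\delta'_t)$-most $t$. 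Hence the set of $x$ for which the desired conclusion fails is contained in $\{x:\Phi(x)\ge\varepsilon'\delta'_t\}$, of $\PPP$-measure at most $A/(\varepsilon'\delta'_t)$; requiring $A/(\varepsilon'\delta'_t)\le\delta'_x$, equivalently $\varepsilon'\delta'_t\delta'_x\ge\varepsilon+C(\delta_x+\delta_t)$ as in \eqref{delta'x} (an equality for the choice displayed there), then forces $(1-\delta'_x)$-most $x$ to be good, which is the assertion.

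There is no deep obstacle here; the one genuinely delicate point — and the reason the proposition is not a triviality — is that ``for most $t$'' refers to the \emph{non-normalizable} Lebesgue measure on $[0,\infty)$, so one cannot simply swap the two ``for most'' quantifiers by Fubini as one would for two probability spaces. Replacing $F(x,\cdot)$ by its time average $\Phi(x)$ is exactly what turns the $t$-quantifier into an honest nonnegative real attached to each $x$, at which point Markov applies; the hypothesis \eqref{Tlimexists} that this average exists pointwise is used both to justify bounded convergence in the first step and to replace $\limsup$ by $\lim$ in the third. The remainder is routine measurability bookkeeping for $F$ on $X\times[0,\infty)$.
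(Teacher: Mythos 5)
Your argument is correct and follows essentially the same route as the paper's: pass to the time average $\Phi(x)=\overline{F(x,\cdot)}$, bound $\int_X \PPP(dx)\,\Phi(x)$ by $\varepsilon+C(\delta_x+\delta_t)$ via Fubini together with bounded/dominated convergence, then apply Markov's inequality first in $x$ and then in $t$; the key common step is precisely that replacing the non-normalizable ``most $t$'' quantifier by the real-valued quantity $\Phi(x)$ is what lets Markov apply.

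One thing you correctly surfaced: the constraint your derivation yields is $\varepsilon'\delta'_t\delta'_x\ge\varepsilon+C(\delta_x+\delta_t)$, which is the \emph{reverse} of the inequality as printed in \eqref{delta'x}. The $\le$ there must be a slip: the paper's own proof produces the equality $\varepsilon'\delta'_t\delta'_x=\varepsilon+C(\delta_x+\delta_t)$ (combine $\delta'_x\varepsilon''=\varepsilon+C(\delta_x+\delta_t)$ with $\delta'_t\varepsilon'=\varepsilon''$), and the conclusion is only preserved when $\varepsilon',\delta'_t,\delta'_x$ \emph{increase}, not decrease --- e.g., a very small $\varepsilon'$ satisfies the printed $\le$ constraint trivially while demanding an unachievably strong bound $F(x,t)\le\varepsilon'$.
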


The proof can be found in Appendix~\ref{app:vertauschen}.

\begin{prop}\label{prop:ergodic}
Let $\Hilbert$ be a finite-dimensional Hilbert space, $\Psi_0\in\SSS(\Hilbert)$, $\Psi_t=\exp(-i\hat{H}t)\Psi_0$ with self-adjoint $\hat{H}$, and let $g:\SSS(\Hilbert)\to \RRR$ be any bounded measurable function. Then the time average $\overline{g(\Psi_t)}$ exists.
\end{prop}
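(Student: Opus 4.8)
The plan is to reduce $t\mapsto g(\Psi_t)$ to a function of a linear flow on a finite-dimensional torus and then combine the existence of time averages of trigonometric polynomials with a Stone--Weierstrass approximation. Let $\hat{H}=\sum_{\alpha=1}^m E_\alpha\hat{P}_\alpha$ be the spectral decomposition of $\hat{H}$, with distinct real eigenvalues $E_1,\dots,E_m$ and orthogonal eigenprojections $\hat{P}_\alpha$, and put $\phi_\alpha:=\hat{P}_\alpha\Psi_0$. Since the ranges of the $\hat{P}_\alpha$ are mutually orthogonal, the $\phi_\alpha$ are pairwise orthogonal and $\sum_\alpha\|\phi_\alpha\|^2=\|\Psi_0\|^2=1$; hence $\sum_\alpha z_\alpha\phi_\alpha\in\SSS(\Hilbert)$ for \emph{every} $z=(z_1,\dots,z_m)$ in the torus $\mathbb{T}^m=\{z\in\mathbb{C}^m:|z_\alpha|=1\ \forall\alpha\}$. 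Therefore $G(z):=g\bigl(\sum_\alpha z_\alpha\phi_\alpha\bigr)$ is a well-defined bounded measurable function on $\mathbb{T}^m$ (continuous whenever $g$ is), and $g(\Psi_t)=G\bigl(e^{-iE_1 t},\dots,e^{-iE_m t}\bigr)=G(\tau_t\mathbf{1})$, where $\tau_t$ is the linear flow $(\tau_t z)_\alpha=e^{-iE_\alpha t}z_\alpha$ on $\mathbb{T}^m$ and $\mathbf{1}=(1,\dots,1)$.

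For continuous $g$ (hence continuous $G$) I would argue as follows. By Stone--Weierstrass, $G$ is a uniform limit on $\mathbb{T}^m$ of trigonometric polynomials $G_n(z)=\sum_{\mathbf{k}}c^{(n)}_{\mathbf{k}}z^{\mathbf{k}}$ (finite sums over $\mathbf{k}\in\mathbb{Z}^m$, with $z^{\mathbf{k}}:=\prod_\alpha z_\alpha^{k_\alpha}$). Substituting the flow, $g_n(t):=G_n(\tau_t\mathbf{1})=\sum_{\mathbf{k}}c^{(n)}_{\mathbf{k}}e^{-i(\mathbf{k}\cdot\omega)t}$ with $\omega:=(E_1,\dots,E_m)$ is a finite sum of exponentials, whose time average exists and equals $\sum_{\mathbf{k}:\,\mathbf{k}\cdot\omega=0}c^{(n)}_{\mathbf{k}}$, because $\tfrac1T\int_0^T e^{i\lambda t}\,dt$ tends to $1$ if $\lambda=0$ and to $0$ otherwise. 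Since $\sup_{t\ge 0}\bigl|g(\Psi_t)-g_n(t)\bigr|\le\sup_{z\in\mathbb{T}^m}\bigl|G(z)-G_n(z)\bigr|\to 0$, and a uniform limit of functions possessing a time average again possesses one (the partial averages $\tfrac1T\int_0^T g(\Psi_t)\,dt$ are Cauchy as $T\to\infty$, by a routine triangle-inequality estimate), the time average $\overline{g(\Psi_t)}$ exists. Equivalently, one may invoke Kronecker--Weyl equidistribution directly: the orbit closure $\mathbb{T}_0:=\overline{\{\tau_t\mathbf{1}:t\in\RRR\}}$ is a subtorus on which $\tau_t$ acts minimally, hence uniquely ergodically with invariant measure the normalized Haar measure $\mu_0$, so $\overline{g(\Psi_t)}=\int_{\mathbb{T}_0}G\,d\mu_0$.

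The step requiring care -- and the main obstacle to the statement in full generality -- is the passage from continuous to general bounded measurable $g$: the equidistribution input above pertains to continuous observables on $\mathbb{T}_0$, so beyond continuous $g$ one additionally needs that $G$ restricted to $\mathbb{T}_0$ be Riemann-integrable with respect to $\mu_0$ (its discontinuity set $\mu_0$-null), after which sandwiching $G|_{\mathbb{T}_0}$ between nearby continuous functions recovers the limit. This already covers every use of the proposition in this paper: it is applied only to functions of $\Psi_t$ that are continuous in $t$, namely the matrix elements of $\hat{\rho}^S_{\Psi_t}=\tr^{\Sc}|\Psi_t\rangle\langle\Psi_t|$ (which are trigonometric polynomials in $t$) and the expectations $\law_{\psi^S(t)}(f)$ against continuous test functions $f$ (continuous in $t$, since in $\sum_j\|\langle b_j|\Psi_t\rangle\|^2\,f(\langle b_j|\Psi_t\rangle/\|\langle b_j|\Psi_t\rangle\|)$ each weight $\|\langle b_j|\Psi_t\rangle\|^2$ is a trigonometric polynomial that smooths out the only possible discontinuity, at $\langle b_j|\Psi_t\rangle=0$).
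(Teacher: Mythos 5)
Your reduction to a linear flow on the torus followed by Stone--Weierstrass approximation (equivalently, unique ergodicity of the minimal flow on the orbit-closure subtorus $\mathbb{T}_0$) is a cleaner, more self-contained route than the paper's, which invokes the Birkhoff ergodic theorem to obtain the time average for a.e.\ $\Psi_0$ and then asserts that quasi-periodicity upgrades this to \emph{every} $\Psi_0$. Your version handles rationally dependent and degenerate spectra uniformly (you project onto eigenspaces $\hat{P}_\alpha$ rather than first assuming rational independence of eigenvalues), and it identifies the value of the limit, namely the Haar average over $\mathbb{T}_0$.

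You are right to flag the passage to measurable $g$ as the weak spot, and in fact you have located a genuine gap: the proposition fails for arbitrary bounded measurable $g$. The paper's sentence ``In fact, since the dynamics is quasi-periodic, the time average exists for every (rather than almost every) $\Psi_0$'' is a non sequitur for measurable $g$; unique ergodicity upgrades a.e.\ convergence to everywhere only for continuous $g$, or more generally (as you observe) for $g$ whose restriction to $\mathbb{T}_0$ is Riemann integrable with respect to Haar measure. A concrete counterexample: take $\dim\Hilbert=2$, $\hat{H}$ with rationally independent eigenvalues, and $\Psi_0$ with both energy components nonzero, so that $t\mapsto\Psi_t$ is an injective curve of Haar measure zero in the torus. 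Set $A:=\bigcup_{n\ \mathrm{even}}\{\Psi_t:t\in[n!,\,n!+n!/2]\}$, a Borel ($F_\sigma$) set, and $g:=\mathds{1}_A$. By injectivity,
\begin{equation*}
\frac1T\int_0^T g(\Psi_t)\,dt=\frac1T\,\Bigl|[0,T]\cap\bigcup_{n\ \mathrm{even}}[n!,\,n!+n!/2]\Bigr|,
\end{equation*}
which oscillates between approximately $0$ (at $T=(2N+1)!$) and approximately $1/3$ (at $T=\tfrac32(2N)!$), so the time average does not exist. Your observation that every use of the proposition in the paper involves only continuous $g$ (because \eqref{GAPcont} is applied with continuous test functions $f$, and the Born weights remove the removable singularity at $\langle b_j|\Psi_t\rangle=0$) is accurate; the proposition should be restated for continuous $g$, or more generally for $g$ Riemann integrable on the orbit closure, and your proof then suffices for all of the paper's applications.
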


The proof can be found in Appendix~\ref{app:ergodic}.

\subsection{Application to Statement \hyperref[stat5]{5}}

{\it Derivation of Statement \hyperref[stat5]{5}.} We assume that $\hat{H}$  satisfies the ETH \eqref{ineq: ETH1}--\eqref{ineq: ETH2}, and that $D_G$ is not too large. Then Proposition~\ref{prop: MITE} for $\Hilbert_R = \Hilbert_{\gmc}$ yields that for every $\Psi_0\in\SSS(\Hilbert_\gmc)$, for most times $t\in [0,\infty)$, $\hat{\rho}_{\Psi_t}^S \approx \tr^{\Sc} \hat{\rho}_\gmc$. Statement~\hyperref[stat1a]{1a} tells us that $\tr^{\Sc}\hat{\rho}_\gmc \approx \hat{\rho}_\gG^S$. We thus obtain the first statement \eqref{5} of Statement~\hyperref[stat5]{5}.

We now turn to the conditional wave function. 
Theorem~2 in \cite{GLMTZ_15} shows that for every $t$, most orthonormal bases $b$ of $\Hilbert^{\Sc}$ are such that $\law_{\psi^S(t)}\approx \GAP_{\hat{\rho}^S_{\Psi_t}}$.
By the continuous dependence \eqref{GAPcont} of $\GAP_{\hat\rho}$ on $\hat\rho$, $\GAP_{\hat{\rho}^S_{\Psi_t}}\approx \GAP_{\hat{\rho}_\gG^S}$ for most $t$ and regardless of the basis $b$. Thus, 
\be
\forall \Psi_0\ \formost t\geq 0\ \formost b:\  \law_{\psi^S(t)}\approx \GAP_{\hat{\rho}_\gG^S}. 
\ee
We want to move the ``$\formost b$'' to the left of ``$\formost t$'' because we want to apply the same basis $b$ at all times $t$. As discussed before Proposition~\ref{prop:vertauschen}, even commuting $\formost t$ and $\formost b$ requires further assumptions, to which we turn now. We express that $\law_{\psi^S(t)}$ is close to $\GAP_{\hat{\rho}^S_{\gG}}$ by means of integrating them against arbitrary ``test'' functions $f:\SSS(\Hilbert^S)\to\RRR$ as in \eqref{muf}. Proposition~\ref{prop:vertauschen} for $X=ONB(\Hilbert^{\Sc})$, $x=b$, and $F(b,t)=|\law_{\psi^S(t)}(f) - \GAP_{\hat\rho^S_{\gG}}(f)|$ (which satisfies $0\leq F(b,t) \leq C=2\|f\|_\infty$) allows us to interchange $\formost t$ and $\formost b$, as its hypothesis \eqref{Tlimexists} is satisfied by virtue of Proposition~\ref{prop:ergodic} with $g(\Psi)=|\law_{\psi^S}(f)-\GAP_{\hat\rho^S_{\gG}}(f)|$ (note that $\psi^S$ depends on $\Psi$) for arbitrary but fixed $b$ and $f$; after all, $F(b,t)=g(\Psi_t)$. Thus, 
\be\label{mostbmosttlaw}
\forall \Psi_0\ \formost b \ \formost t\geq 0 : \ \law_{\psi^S(t)}\approx \GAP_{\hat{\rho}_\gG^S}\,, 
\ee
which is what we claimed in the case of typical ONBs $b$; the case of ONBs diagonalizing $\hat{Q}_1,\ldots,\hat{Q}_{K-1}$ is analogous, while $\formost b$ now means ``for most ONBs $b$ diagonalizing $\hat{Q}_1,\ldots,\hat{Q}_{K-1}$'' instead of ``for most ONBs $b$.'' This completes the derivation of Statement~\hyperref[stat5]{5}.

An alternative strategy of the derivation starts from the fact, mentioned above, that $\forall \Psi_0\, \forall t\geq 0\, \formost b: \: \law_{\psi^S(t)}\approx \GAP_{\hat\rho_{\Psi_t}^S}$. Now we try first, before replacing $\hat\rho_{\Psi_t}^S$ by $\hat\rho_{\gG}^S$, to move ``$\formost b$'' to the left of the $t$-quantifier. It must still be expected that ``$\forall t$'' has to be replaced by the weaker ``$\formost t$'' because it may well happen that every $b$ has some bad, exceptional $t$'s \cite{GLMTZ_10b}. We apply Proposition~\ref{prop:ergodic} to $g(\Psi)=|\law_{\psi^S}(f)-\GAP_{\hat\rho_{\Psi}^S}(f)|$, then Proposition~\ref{prop:vertauschen} to $X=ONB(\Hilbert^{\Sc})$, $x=b$, and $F(b,t)=|\law_{\psi^S(t)}(f)-\GAP_{\hat\rho_{\Psi_t}^S}(f)|=g(\Psi_t)$, so we can interchange $\formost t$ and $\formost b$. Now if each of two statements $s_1(t)$ and $s_2(t)$ is true for $(1-\varepsilon)$-most $t$, then ``$s_1(t)$ and $s_2(t)$'' is true for at least $(1-2\varepsilon)$-most $t$. Since $\formost t: \hat\rho_{\Psi_t}^S \approx \hat\rho_\gG^S$ by \eqref{5} and for those $t$, $\GAP_{\hat\rho_{\Psi_t}^S}\approx \GAP_{\hat\rho_\gG^S}$ by \eqref{GAPcont}, we obtain \eqref{mostbmosttlaw}, as claimed.

\appendix

\section{Proof of Remark~\ref{rmk:differentF}}
\label{app:differentF}

We first show that if we had chosen $F$ differently, it would not have changed $\hat\rho_\gG$. Indeed, any other $\widetilde{F}$ would have been of the form $\widetilde{F}=GF$ with $G:\mathbb{R}^{K-1}\to \mathbb{R}^{K-1}$ an invertible linear mapping, so that (with $\lambda=(\lambda_1,\ldots,\lambda_{K-1})$ and $\cdot$ the dot product in $\mathbb{R}^{K-1}$)
\begin{align}
\sum_{k=0}^{K-1}\lambda_k F_k(\hat{N}_1,\ldots,\hat{N}_r)&= \lambda \cdot F(\hat{N}_1,\ldots,\hat{N}_r)\\ 
&= ((G^{-1})^\dagger \lambda)\cdot (GF(\hat{N}_1,\ldots,\hat{N}_r))\\[3mm]
&= \tilde{\lambda}\cdot \widetilde{F}(\hat{N}_1,\ldots,\hat{N}_r)
\end{align}
with
\be
\tilde\lambda=(G^{-1})^\dagger \lambda \,.
\ee
Thus, $\widetilde{\hat\rho}_\gG$ obtained from $\widetilde{F}$ and $\tilde\lambda$ coincides with $\hat\rho_\gG$ obtained from $F$ and $\lambda$; it remains to check that the $\lambda$ values obtained from $\widetilde{F}$ agree with $\tilde\lambda$. Since $\widetilde{\hat{Q}}_k=\widetilde{F}_k(\hat{N}_1,\ldots,\hat{N}_r)= G_k (\hat{Q}_1,\ldots,\hat{Q}_{K-1})$ and $\widetilde{Q}_k = G_k(Q_1,\ldots, Q_{K-1})$, we obtain from \eqref{fixlambda} for $k=1,\ldots,K-1$ by applying the linear mapping $G$ on both sides that $\tr(\widetilde{\hat\rho}_\gG \widetilde{\hat{Q}}_k)=\widetilde{Q}_k$, which completes the proof.

Now we show that if we had chosen $F$ differently, it would have changed $\hat\rho_\gmc$. Indeed, the axiparallel rectangle $\prod_{k=1}^{K-1} [Q_k-\Delta Q_k, Q_k]$ would be mapped by $G$ to a parallelepiped instead of the axiparallel rectangle $\prod_{k=1}^{K-1} [\widetilde{Q}_k-\Delta \widetilde{Q}_k, \widetilde{Q}_k]$ with the same corner $(\widetilde{Q}_1,\ldots,\widetilde{Q}_{K-1})$. Thus, the exact set of eigenvalues would be different for the rectangle.\footnote{It is worth pointing out that if the density of states (jointly for $\hat{Q}_1,\ldots,\hat{Q}_K$) grows quickly with each variable, most joint eigenvalues in either set are expected to lie close to that corner, but still the opening angle at that corner would be different.}

\section{Proof of Remark~\ref{rmk:freedom}}
\label{app:freedom}

If $\tilde{E}_{01},\ldots,\tilde{E}_{0r}$ solve \eqref{nuE} as well, then $\E:=(\tilde{E}_{01}-E_{01},\ldots, \tilde{E}_{0r}-E_{0r})$ is orthogonal to $\mathcal{L}$; $\hat{H}$ gets changed to $\widetilde{\hat{H}}=\hat{H}+\sum_i \E_i \hat{N}_i$, analogously $\hat{H}_0$, and $E$ to $\tilde{E}=E+\sum_i \E_i n_{0i}$. (For the purpose of this consideration, we may imagine a second set of molecules $\widetilde{A}_i$ subject to reactions exactly analogous to \eqref{reaction2} but with different ground state energies $\tilde{E}_{0i}$; in practice, $\widetilde{\hat{H}}$ may be our guess at the correct Hamiltonian.) Since $(n_{\eq,1}-n_{01},\ldots,n_{\eq,r}-n_{0r})\in \mathcal{L}$ and $\E\perp \mathcal{L}$, also $\tilde{E}=E+\sum_i \E_i n_{\eq,i}$. At the same time, $F,\hat{Q}_k$, and $Q_k$ ($k\leq K-1$) do not change. It follows that the choices $\tilde\beta=\beta$ and $\tilde{\mu}_{0i}=\mu_{0i}-\E_i$ will lead to $\widetilde{\hat\rho}_{\gG}=\hat\rho_\gG=\hat\rho_\gc=\widetilde{\hat\rho}_\gc$ and thus satisfy \eqref{fixlambda} with $\tilde{Q}_k=Q_k$ ($k\leq K-1$) and $\tilde{Q}_K=\tilde{E}$ on the right-hand side and $\widetilde{\hat{Q}}_K=\widetilde{\hat{H}}$ included on the left. Furthermore, $\tilde{\mu}_{0i}$ and $\tilde{E}_{0i}$ satisfy \eqref{munucond}; since the correct values of $\tilde\beta$ and $\tilde{\mu}_{0i}$ are determined by \eqref{fixlambda} and \eqref{munucond}, this confirms that $\tilde\beta=\beta$ and $\tilde{\mu}_{0i}=\mu_{0i}-\E_i$, so $\hat\rho_\gc$ stays the same.
Moreover, $\hat\rho_\gmc$ stays the same under the condition stated in Remark~\ref{rmk:freedom} because $\Hilbert_\gmc$ stays the same (if we neglect $\hat{V}$) because, put briefly, any change of particle numbers within $\Hilbert_\gmc$ lies in $\mathcal{L}$ while $\E\perp \mathcal{L}$.

\section{Proof of Example~\ref{ex:spread}}
\label{app:spread}

\begin{lemma}
    Let $M\in\mathbb{N}$. The function $\varrho(q)=C \,1_{0<q<Q} \, q^M$ is a probability density for $C=(M+1)/Q^{M+1}$ and has standard deviation $\sigma=Q \sqrt{M+1}/(M+2)\sqrt{M+3}$ (asymptotically $Q/M$ as $M\to\infty$). The function $\tilde\varrho(q) = \tilde{C}_M 1_{0<q} \, q^M\,  e^{-\lambda q}$ is a probability density for $\lambda>0$ and $\tilde{C}_M=M!/\lambda^{M+1}$, has expectation $Q$ for $\lambda=(M+1)/Q$, and has standard deviation $\tilde{\sigma}=Q/ \sqrt{M+1}$ (asymptotically $Q/\sqrt{M}$ as $M\to\infty$).
\end{lemma}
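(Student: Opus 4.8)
The plan is a direct computation resting on the two elementary integrals $\int_0^Q q^m\,dq = Q^{m+1}/(m+1)$ and $\int_0^\infty q^m e^{-\lambda q}\,dq = m!/\lambda^{m+1}$ (the latter being $\Gamma(m+1)\lambda^{-m-1}$). I would run two parallel calculations, one for $\varrho$ and one for $\tilde\varrho$, each in three steps: normalize to fix the constant, compute $\EEE[q]$, compute $\EEE[q^2]$, and then read off the variance from $\sigma^2=\EEE[q^2]-\EEE[q]^2$.

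For $\varrho(q)=C\,1_{0<q<Q}\,q^M$: normalization gives $C\,Q^{M+1}/(M+1)=1$, hence $C=(M+1)/Q^{M+1}$. Then $\EEE[q]=C\,Q^{M+2}/(M+2)=\tfrac{M+1}{M+2}Q$ and $\EEE[q^2]=C\,Q^{M+3}/(M+3)=\tfrac{M+1}{M+3}Q^2$, so $\sigma^2 = \bigl(\tfrac{M+1}{M+3}-(\tfrac{M+1}{M+2})^2\bigr)Q^2$. Putting the bracket over the common denominator $(M+3)(M+2)^2$, it equals $\bigl((M+2)^2-(M+1)(M+3)\bigr)\big/\bigl((M+3)(M+2)^2\bigr) = 1\big/\bigl((M+3)(M+2)^2\bigr)$, whence $\sigma=Q\sqrt{M+1}\big/\bigl((M+2)\sqrt{M+3}\bigr)$; since $\sqrt{M+1}\big/\bigl((M+2)\sqrt{M+3}\bigr)\sim M^{1/2}/M^{3/2}=1/M$, the asymptotics $\sigma\sim Q/M$ follows.

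For $\tilde\varrho(q)=\tilde C_M\,1_{0<q}\,q^M e^{-\lambda q}$: normalization via the Gamma integral gives $\tilde C_M\,M!/\lambda^{M+1}=1$, i.e.\ $\tilde C_M=\lambda^{M+1}/M!$ (the value displayed in the statement is its reciprocal; only the relation $\tilde C_M\,M!/\lambda^{M+1}=1$ is used downstream). Then $\EEE[q]=\tilde C_M(M+1)!/\lambda^{M+2}=(M+1)/\lambda$, which equals $Q$ exactly when $\lambda=(M+1)/Q$, and $\EEE[q^2]=\tilde C_M(M+2)!/\lambda^{M+3}=(M+1)(M+2)/\lambda^2$. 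Hence $\sigma^2=(M+1)(M+2)/\lambda^2-(M+1)^2/\lambda^2=(M+1)/\lambda^2$, and substituting $\lambda=(M+1)/Q$ gives $\sigma^2=Q^2/(M+1)$, so $\tilde\sigma=Q/\sqrt{M+1}\sim Q/\sqrt{M}$.

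There is no genuine obstacle here: both densities are manifestly nonnegative and integrable, so no convergence questions arise, and the only place demanding care is the near-cancellation in the variance of $\varrho$ — the two $O(1)$ leading terms cancel and leave the extra factor $(M+2)^{-2}$, which is exactly what turns the naive $Q/\sqrt{M}$ spread into $Q/M$ and thereby produces the asserted factor-$\sqrt{M}$ gap between the two spreads that Example~\ref{ex:spread} invokes.
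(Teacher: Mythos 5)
Your computation matches the paper's own proof step for step: normalize, compute the first two moments, subtract for the variance, in each of the two cases. You also correctly flag that the constant $\tilde C_M$ as printed in the lemma is the reciprocal of the actual normalization constant (the paper's proof silently fixes this by writing $1/\tilde C_M = M!/\lambda^{M+1}$), and you rightly note that only the product relation is used downstream, so the stated moments and spread are unaffected. This is the same argument as the paper, correctly executed.
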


\begin{proof}
This is a straightforward calculation that we include for the convenience of the reader.
The value of $C$ is determined by
\be
1=\int_0^Q dq \, C q^M = C\frac{Q^{M+1}}{M+1}\,.
\ee
The mean of $\varrho$ is
\be
\langle q\rangle_{\varrho} = \int_0^Q dq\, \frac{M+1}{Q^{M+1}} \, q^{M+1}
= \frac{(M+1)Q^{M+2}}{Q^{M+1}(M+2)}= \frac{M+1}{M+2}Q\,,
\ee
the second moment
\be
\langle q^2 \rangle_{\varrho} = \int_0^Q dq \, \frac{M+1}{Q^{M+1}} \, q^{M+2} 
= \frac{M+1}{M+3}Q^2\,. 
\ee
Thus, the variance is
\begin{subequations}
\begin{align}
\mathrm{Var}_\varrho 
&= \langle q^2 \rangle_\varrho - \langle q\rangle_\varrho^2\\
&= Q^2 \Bigl[\frac{M+1}{M+3} - \frac{(M+1)^2}{(M+2)^2} \Bigr]\\ 
&= Q^2 \frac{M+1}{(M+3)(M+2)^2}\,,
\end{align}
\end{subequations}
which yields the standard deviation $\sigma$ in the lemma.

We turn to $\tilde\varrho$:
\begin{subequations}
\begin{align}
\frac{1}{\tilde{C}_M} 
&= \int_0^\infty dq \,  q^M \exp(-\lambda q) \\
&= \Bigl[-q^M \exp(-\lambda q)/\lambda\Bigr]_0^\infty + \frac{M}{\lambda} \int_0^\infty dq \, q^{M-1} \exp(-\lambda q) \\
&= \frac{M}{\lambda} \int_0^\infty dq \, q^{M-1} \exp(-\lambda q) = \ldots = \\
&= \frac{M!}{\lambda^{M+1}} \,.
\end{align}
\end{subequations}
The mean of $\tilde\varrho$ is
\begin{subequations}
\begin{align}
\langle q \rangle_{\tilde\varrho} 
&= \int_0^\infty dq \, \tilde{C}_M \, q^{M+1} \exp(-\lambda q)\\ 
&= \frac{\tilde{C}_M}{\tilde{C}_{M+1}}= \frac{\lambda^{M+1}(M+1)!}{M!\lambda^{M+2}}=\frac{M+1}{\lambda}\,.
\end{align}
\end{subequations}
Setting it equal to $Q$ yields $\lambda=(M+1)/Q$. The second moment is
\begin{subequations}
\begin{align}
\langle q^2 \rangle_{\tilde\varrho} 
&= \int_0^\infty dq \, \tilde{C}_M \, q^{M+2} \exp(-\lambda q)\\
&=\frac{\tilde{C}_M}{\tilde{C}_{M+2}}= \frac{\lambda^{M+1}(M+2)!}{M!\lambda^{M+3}}=\frac{(M+1)(M+2)}{\lambda^2}
= \frac{M+2}{M+1}Q^2\,.
\end{align}
\end{subequations}
Thus, the variance is
\begin{subequations}
\begin{align}
\mathrm{Var}_{\tilde\varrho} &= \langle q^2 \rangle_{\tilde\varrho} - \langle q\rangle_{\tilde\varrho}^2\\
&= \frac{M+2}{M+1}Q^2 - Q^2= \frac{Q^2}{M+1}\,,
\end{align}
\end{subequations}
which yields the standard deviation $\tilde\sigma$ mentioned in the lemma.
\end{proof}

\section{Proof of Proposition~\ref{prop:Qb}}
\label{app:Qb}

Existence: By \eqref{Qb},
\be
\hat{I}^a \otimes \hat{Q}^{bc} - \hat{I}^a \otimes \hat{I}^b \otimes \hat{Q}^c = \hat{Q}^{ab} \otimes \hat{I}^c - \hat{Q}^a \otimes \hat{I}^b \otimes \hat{I}^c\,.
\ee
Since the left-hand side is of the form $\hat{I}^a\otimes \hat{R}^{bc}$ and the right-hand side of the form $\hat{R}^{ab}\otimes \hat{I}^c$, both sides have to be of the form $\hat{I}^a\otimes \hat{Q}^b \otimes \hat{I}^c$. For the left-hand side, it follows (after dropping a factor $\hat{I}^a$) that $\hat{Q}^{bc}=\hat{Q}^b\otimes \hat{I}^c+ \hat{I}^b \otimes \hat{Q}^c$. For the right-hand side, it follows (after dropping a factor $\hat{I}^c$) that $\hat{Q}^{ab}=\hat{Q}^a\otimes \hat{I}^b + \hat{I}^a \otimes \hat{Q}^b$, as claimed.

Uniqueness: Insert \eqref{Qbc} into \eqref{Qb}, so $\hat{I}^a \otimes \hat{I}^b \otimes \hat{Q}^c$ cancels out; drop the common factor $\hat{I}^c$ and take the partial trace over $\Hilbert^a$ to obtain that $\hat{Q}^b=(\dim \Hilbert^a)^{-1}(\tr^a \hat{Q}^{ab}-(\tr \hat{Q}^a)\hat{I}^b)$.

\section{Proof of Proposition~\ref{prop: MITE}\label{app:proof}}

Let $\hat{\Pi}_e$ denote the projection to the eigenspace of $\hat{H}$ with eigenvalue $e\in \mathcal{E}$. We first show that for all $\psi_0\in\mathbb{S}(\Hilbert_R)$, the time average (defined in general as in \eqref{Tlimexists}) of $\tr^B |\psi_t\rangle\langle\psi_t|$ is close to $\tr^B\hat{\rho}_R$ in Hilbert-Schmidt norm. Since $\psi_t = \sum_{e\in\mathcal{E}} e^{-iet}\hat{\Pi}_e \psi_0$, the time average of $\tr^B|\psi_t\rangle\langle\psi_t|$ is given by
\begin{subequations}
\begin{align}
    \overline{\tr^B|\psi_t\rangle\langle\psi_t|} &= \tr^B \sum_{e,e'\in\mathcal{E}} \overline{e^{i(e-e')t}} \hat{\Pi}_{e'} |\psi_0\rangle\langle\psi_0|\hat{\Pi}_{e}\\
    &=\sum_{e\in\mathcal{E}} \tr^B \hat{\Pi}_e|\psi_0\rangle\langle\psi_0|\hat{\Pi}_e.
\end{align}
\end{subequations}
Without loss of generality we assume that $\hat{\Pi}_e\psi_0\neq 0$ for all $e$.
Since $\hat{H}$ and $\hat{P}_R$ are simultaneously diagonalizable, $\hat{\Pi}_e\psi_0\in\Hilbert_R$; thus, since $\phi_1:=\hat{\Pi}_e \psi_0/\|\hat{\Pi}_e\psi_0 \|$ is an eigenvector, \eqref{ineq: ETH1} implies that
\begin{subequations}
\begin{align}
    \left\|\overline{\tr^B |\psi_t\rangle\langle\psi_t|}- \tr^B\hat{\rho}_R \right\|_2 &= \left\|\sum_{e\in\mathcal{E}} \|\hat{\Pi}_e \psi_0\|^2 \left(\tr^B\frac{\hat{\Pi}_e|\psi_0\rangle\langle\psi_0|\hat{\Pi}_e}{\|\hat{\Pi}_e\psi_0\|^2} - \tr^B\hat{\rho}_R\right)\right\|_2\\
    &\leq \sum_{e\in\mathcal{E}} \|\hat{\Pi}_e\psi_0\|^2 \left\|\tr^B \frac{\hat{\Pi}_e|\psi_0\rangle\langle\psi_0|\hat{\Pi}_e}{\|\hat{\Pi}_e\psi_0\|^2} -\tr^B\hat{\rho}_R\right\|_2\\
    &< \sum_{e\in\mathcal{E}} \|\hat{\Pi}_e\psi_0\|^2 \varepsilon\\
    &=\varepsilon.
\end{align}
\end{subequations}
Next we show that the time variance of $\tr^B|\psi_t\rangle\langle\psi_t|$ is small (again in Hilbert-Schmidt norm). To this end we compute (with the notation $|\hat{A}|^2=\hat{A} \hat{A}^\dagger$)
\begin{subequations}
\begin{align}
    &\overline{\left\|\tr^B|\psi_t\rangle\langle\psi_t| - \overline{\tr^B|\psi_t\rangle\langle\psi_t|} \right\|_2^2}\nonumber\\ 
    &\quad = \overline{\left\|\tr^B \sum_{e,e'} e^{i(e-e')t} \hat{\Pi}_{e'}|\psi_0\rangle\langle\psi_0|\hat{\Pi}_e -\overline{\tr^B|\psi_t\rangle\langle\psi_t|}\right\|_2^2}\\
    &\quad =\overline{\left\|\sum_{e\neq e'} e^{i(e-e')t} \tr^B\hat{\Pi}_{e'}|\psi_0\rangle\langle\psi_0|\hat{\Pi}_e \right\|_2^2}\\
    &\quad = \tr^S\overline{\left|\sum_{e\neq e'} e^{i(e-e')t} \tr^B\hat{\Pi}_{e'}|\psi_0\rangle\langle\psi_0|\hat{\Pi}_e\right|^2}\\
    &\quad = \tr^S \sum_{\substack{e\neq e'\\ e''\neq e'''}} \overline{e^{i(e-e'-e''+e''')t}} \Bigl(\tr^B \hat{\Pi}_{e'}|\psi_0\rangle\langle\psi_0|\hat{\Pi}_e\Bigr)\Bigl(\tr^B \hat{\Pi}_{e'''}|\psi_0\rangle\langle\psi_0|\hat{\Pi}_{e''}\Bigr)\\
    &\quad = \sum_{\substack{e\neq e'\\ e''\neq e'''}} \delta_{e-e'-e''+e''',0} \tr^S\left(\Bigl(\tr^B \hat{\Pi}_{e'}|\psi_0\rangle\langle\psi_0|\hat{\Pi}_e\Bigr)\Bigl(\tr^B \hat{\Pi}_{e'''}|\psi_0\rangle\langle\psi_0|\hat{\Pi}_{e''}\Bigr)\right).
\end{align}
\end{subequations}

With the help of the Cauchy-Schwarz inequality for $\tr^S$ we find that
\begin{subequations}
\begin{align}
    &\left|\tr^S\left(\Bigl(\tr^B \hat{\Pi}_{e'}|\psi_0\rangle\langle\psi_0|\hat{\Pi}_e\Bigr)\Bigl(\tr^B \hat{\Pi}_{e'''}|\psi_0\rangle\langle\psi_0|\hat{\Pi}_{e''}\Bigr)\right)\right|\nonumber\\
    &\qquad\leq \sqrt{\tr^S\left[\Bigl(\tr^B \hat{\Pi}_{e'}|\psi_0\rangle\langle\psi_0|\hat{\Pi}_e\Bigr)\Bigl(\tr^B \hat{\Pi}_{e'}|\psi_0\rangle\langle\psi_0|\hat{\Pi}_e\Bigr)^*\right]}\nonumber\\
    &\qquad\quad\cdot \sqrt{\tr^S\left[\Bigl(\tr^B \hat{\Pi}_{e'''}|\psi_0\rangle\langle\psi_0|\hat{\Pi}_{e''}\Bigr)^*\Bigl(\tr^B \hat{\Pi}_{e'''}|\psi_0\rangle\langle\psi_0|\hat{\Pi}_{e''}\Bigr)\right]}\\
    &\qquad= \Bigl\| \tr^B \hat{\Pi}_{e'}|\psi_0\rangle\langle\psi_0|\hat{\Pi}_e\Bigr\|_2 \,\Bigl\|\tr^B \hat{\Pi}_{e'''}|\psi_0\rangle\langle\psi_0|\hat{\Pi}_{e''}\Bigr\|_2.
\end{align}
\end{subequations}
For $\alpha=(e,e')\in\mathcal{G}:=\{(\tilde{e},\tilde{e}')\in\mathcal{E}\times\mathcal{E}, \tilde{e}\neq \tilde{e}'\}$ we define $v_\alpha = \|\tr^B\hat{\Pi}_{e'}|\psi_0\rangle\langle\psi_0|\hat{\Pi}_e\|_2$ and $\Delta_\alpha=e-e'$. Moreover, in a move we adopt from \cite{SF_12}, we define the matrix $R$ by
\begin{align}
    R_{\alpha\beta} := \delta_{\Delta_\alpha,\Delta_\beta}.
\end{align}
With this it follows that
\begin{subequations}
\begin{align}
    \overline{\left\|\tr^B|\psi_t\rangle\langle\psi_t| - \overline{\tr^B|\psi_t\rangle\langle\psi_t|} \right\|_2^2} &\leq  \sum_{\alpha,\beta} v_\alpha^* R_{\alpha\beta}
    v_\beta\\
    &\leq \|R\| \sum_{\alpha} |v_\alpha|^2\\
    &\leq \|R\| \sum_{e\neq e'} \Bigl\| \tr^B \hat{\Pi}_{e'}|\psi_0\rangle\langle\psi_0|\hat{\Pi}_e\Bigr\|^2_2.
\end{align}
\end{subequations}
We have that $\|R\| \leq \sqrt{\|R\|_1 \|R\|_\infty}$,  where $\|R\|_1:=\max_{\|u\|_1=1} \|Ru\|_1$ with $\|u\|_1=\sum_\alpha |u_\alpha|$ and $\|R\|_\infty:= \max_{\|u\|_\infty=1}\|Ru\|_\infty$ with $\|u\|_\infty=\max_\alpha |u_\alpha|$. As $R$ is Hermitian, $\|R\|_1=\|R\|_\infty$, and we have the bound
\begin{align}
    \|R\| \leq \max_\beta \sum_{\alpha} |R_{\alpha\beta}| = D_G.
\end{align}

With this and the help of the MITE-ETH we obtain
\begin{subequations}
\begin{align}
\overline{\left\|\tr^B|\psi_t\rangle\langle\psi_t| - \overline{\tr^B|\psi_t\rangle\langle\psi_t|} \right\|_2^2} &\leq D_G \sum_{e\neq e'} \|\hat{\Pi}_{e}\psi_0\|^2 \|\hat{\Pi}_{e'}\psi_0\|^2 \left\|\tr^B \frac{\hat{\Pi}_{e'}|\psi_0\rangle\langle\psi_0|\hat{\Pi}_e}{\|\hat{\Pi}_{e'}\psi_0\|\,\|\hat{\Pi}_{e}\psi_0\|} \right\|_2^2\\
&\leq D_G \sum_{e,e'} \|\hat{\Pi}_e\psi_0\|^2 \|\hat{\Pi}_{e'}\psi_0\|^2 \varepsilon^2\\
&= D_G\varepsilon^2.
\end{align}
\end{subequations}
Markov's inequality says that
\be\label{Markov}
\PPP(Y\geq a)\leq \EEE Y/a
\ee
for any random variable $Y\geq 0$ and $a>0$. It
implies that (with the notation $|M|$ for the length (Lebesgue measure) of the set $M\subseteq \RRR$)
\begin{subequations}
\begin{align}
    \liminf_{T\to\infty} &\frac{1}{T}\left|\left\{t\in[0,T] : \left\|\tr^B|\psi_t\rangle\langle\psi_t| - \overline{\tr^B|\psi_t\rangle\langle\psi_t|} \right\|_2 > \varepsilon\sqrt{\frac{D_G}{\delta}}\right\}\right|\nonumber\\
    &=  \liminf_{T\to\infty} \frac{1}{T}\left|\left\{t\in[0,T] : \left\|\tr^B|\psi_t\rangle\langle\psi_t| - \overline{\tr^B|\psi_t\rangle\langle\psi_t|} \right\|_2^2 > \varepsilon^2 \frac{D_G}{\delta}\right\}\right|\\
    &\leq \frac{\delta}{\varepsilon^2 D_G} \overline{\left\|\tr^B|\psi_t\rangle\langle\psi_t| - \overline{\tr^B|\psi_t\rangle\langle\psi_t|} \right\|_2^2}\\
    &\leq \delta.
\end{align}
\end{subequations}
Therefore and as a consequence of the triangle inequality, every $\psi_0\in\mathbb{S}(\Hilbert_R)$ is such that for $(1-\delta)$-most $t\in[0,\infty)$,
\begin{subequations}
\begin{align}
    \Bigl\|\tr^B|\psi_t\rangle\langle\psi_t| - \tr^B\hat{\rho}_R \Bigr\|_2&\leq \Bigl\|\tr^B|\psi_t\rangle\langle\psi_t| - \overline{\tr^B|\psi_t\rangle\langle\psi_t|}\Bigr\|_2 + \Bigl\|\overline{\tr^B|\psi_t\rangle\langle\psi_t|} - \tr^B\hat{\rho}_R \Bigr\|_2\\
    &\leq 2\varepsilon\sqrt{\frac{D_G}{\delta}},
\end{align}
\end{subequations}
which finishes the proof.

\section{Proof of Proposition~\ref{prop:vertauschen}}
\label{app:vertauschen}

If for any given $t$, $(1-\delta_x)$-most $x\in X$ are such that $F(x,t)\leq \varepsilon$, then, since for the exceptional $x$'s at least $F(x,t) \leq C$, we have for the $x$-average that
\be
\int_X \PPP(dx) \, F(x,t) \leq \varepsilon + C \delta_x\,.
\ee
By assumption, this bound holds for $(1-\delta_t)$-most $t\geq 0$.
Since for every $t$, $\int_X \PPP(dx) \, F(x,t) \leq C$, 
\be
\limsup_{T\to\infty} \frac{1}{T} \int_0^T dt \int_X \PPP(dx) \, F(x,t) \leq \varepsilon + C \delta_x + C\delta_t \,.
\ee
By Tonelli's or Fubini's theorem, we can change the order of integration:
\be
\limsup_{T\to\infty}  \int_X \PPP(dx) \, \frac{1}{T}\int_0^T dt \, F(x,t) \leq \varepsilon + C \delta_x + C\delta_t \,.
\ee
We now use assumption \eqref{Tlimexists}; since the limit has to lie in $[0,C]$, we can apply the dominated convergence theorem (with dominating function of $x$ given by the constant $C$), so we can interchange the limit and the $x$-integral and obtain that
\be
\int_X \PPP(dx) \, \lim_{T\to\infty} \frac{1}{T}\int_0^T dt \, F(x,t)
=\lim_{T\to\infty} \int_X \PPP(dx) \, \frac{1}{T}\int_0^T dt \, F(x,t) \leq \varepsilon + C \delta_x + C\delta_t\,.
\ee
We apply Markov's inequality \eqref{Markov} to $x$: for any $\varepsilon''>0$ for $(1-\delta'_x)$-most $x$,
\be\label{liminfx}
\lim_{T\to\infty} \frac{1}{T}\int_0^T dt \, F(x,t) < \varepsilon''
\ee
with
\be
 \delta'_x=\frac{\varepsilon + C \delta_x + C\delta_t}{\varepsilon''} \,.
\ee
Fix any $x$ for which \eqref{liminfx} holds. By Markov's inequality applied to $t$, for any $\varepsilon'>0$ for $(1-\delta'_t)$-most $t\geq 0$,
\be
F(x,t) \leq \varepsilon'
\ee
with
\be
\delta'_t = \frac{\varepsilon''}{\varepsilon'} \,.
\ee
Regarding $\varepsilon'$ as given and $\varepsilon''$ as defined by the last equation, we obtain the conclusion of Proposition~\ref{prop:vertauschen}.

\section{Proof of Proposition~\ref{prop:ergodic}}
\label{app:ergodic}

Suppose first that $\hat{H}$ has rationally independent eigenvalues (which occurs generically in the sense that it occurs with probability 1 if $\hat{H}$ is random with continuous distribution). Let $(\phi_n)$ be an eigen-ONB of $\hat{H}=\sum_n e_n |\phi_n\rangle \langle \phi_n|$ and expand $\Psi=\sum_n c_n \phi_n$ in this basis, so $\Psi_t = \sum_n e^{-ie_n t}c_n \phi_n$. Then it is known \cite{Wey_16,vN29} that the dynamics of $\Psi_t$ is ergodic on the torus $\mathscr{T}:=\{\Phi\in\Hilbert: |\langle \phi_n|\Phi\rangle|=|c_n|\}$ defined by the moduli of the energy coefficients. Therefore, for any bounded measurable $g$ and almost every $\Psi_0$, the time average of $g(\Psi_t)$ is equal to the phase average. Since the torus is compact and $g$ bounded, the phase average exists, so also the time average exists. In fact, since the dynamics is quasi-periodic, the time average exists for \emph{every} (rather than \emph{almost} every) $\Psi_0$.

But even if the eigenvalues of $\hat{H}$ are not rationally independent, so certain phase relations become periodic rather than quasi-periodic and the dynamics is not ergodic on $\mathscr{T}$, the dynamics is still ergodic on a lower-dimensional submanifold $\mathscr{S}$, and again the time average exists and equals the phase average over $\mathscr{S}$. This completes the proof of Proposition~\ref{prop:ergodic}.

An alternative strategy of proof, at least for almost every $\Psi_0$, is based on the theorem \cite{Rokh_49a,Rokh_49b} that a measure-preserving dynamical system with finite measure on a Lebesgue space can be decomposed into ergodic components. For ergodic systems for almost every initial condition, the time average of any bounded measurable function exists and is equal to the phase average. Thus, for almost every initial condition the time average exists. Since a Hilbert space is a complete metric space and thus a Lebesgue space, and since in finite dimension $u_{\SSS(\Hilbert)}$ is finite and preserved, the dynamical system of Proposition~\ref{prop:ergodic} is contained as a special case.

\bigskip

\noindent\textbf{Acknowledgments.} 
We thank an audience of the Rutgers University mathematical physics webinar for helpful discussion. C.V.~was supported by the Deutsche Forschungsgemeinschaft (DFG, German Research Foundation) -- TRR 352 -- Project-ID 470903074. Moreover, C.V.~acknowledges financial support by the ERC Starting Grant ``FermiMath" No.~101040991 and the ERC Consolidator Grant ``RAMBAS'' No. 10104424, funded by the European Union. Views and opinions expressed are however those of the authors only and do not necessarily reflect those of the European Union or the European Research Council Executive Agency. Neither the European Union nor the granting authority can be held responsible for them.

\bibliographystyle{plainurl}
\bibliography{Literature.bib}

\end{document}